\newtheorem{lemma}{Lemma}[section]
\newtheorem{theorem}[lemma]{Theorem}
\newtheorem{definition}[lemma]{Definition}
\newtheorem{conjecture}[lemma]{Conjecture}
\newcommand{\nices}{\mathcal{S}}
\newcommand{\instanceS}{(S;S_1,\ldots,S_m)}
\newcommand{\edge}{-}
\newcommand{\dist}{\textrm{dist}}
\newcommand{\core}{\textrm{core}}
\newcommand{\rooot}{\textrm{link}}
\newcommand{\link}{\rooot}
\newcommand{\depth}{\textrm{depth}}
\newcommand{\nicep}{\mathcal{P}}
\newcommand{\girth}[1]{\mathcal{GIRTH}_{#1}}
\newcommand{\girthplus}[1]{\mathcal{GIRTH}^+_{#1}}
\newcommand{\hc}{H2C}
\begin{document}


\title{Large-girth roots of graphs}

\author[Anna Adamaszek, Micha{\l} Adamaszek]{Anna Adamaszek$^{1,3}$, Micha{\l} Adamaszek$^{2,3}$}
\thanks{\textit{Subject classification}: Algorithms and data structures}
\thanks{Research supported by the Centre for Discrete
        Mathematics and its Applications (DIMAP), EPSRC award EP/D063191/1.}
\thanks{$^1$Department of Computer Science, University of Warwick, Coventry, CV4 7AL, UK}
\thanks{$^2$Warwick Mathematics Institute, University of Warwick, Coventry, CV4 7AL, UK}
\thanks{$^3$Centre for Discrete Mathematics and its Applications (DIMAP), University of Warwick}
\thanks{E-mails: \texttt{\{annan,aszek\}@mimuw.edu.pl}}



\begin{abstract}
We study the problem of recognizing graph powers and computing roots of graphs. We provide a polynomial time recognition algorithm for $r$-th powers of graphs of girth at least $2r+3$, thus improving a bound  conjectured by Farzad et al. (STACS 2009). Our algorithm also finds all $r$-th roots of a given graph that have girth at least $2r+3$ and no degree one vertices, which is a step towards a recent conjecture of Levenshtein that such root should be unique. On the negative side, we prove that recognition becomes an NP-complete problem when the bound on girth is about twice smaller. Similar results have so far only been attempted for $r=2,3$.
\end{abstract}

\keywords{Graph roots, Graph powers, NP-completeness, Recognition algorithms}

\maketitle

\section{Introduction}
\label{section:introduction}

All graphs in this paper are simple, undirected and connected. If $H$ is a graph, its \emph{$r$-th power} $G=H^r$ is the graph on the same vertex set such that two distinct vertices are adjacent in $G$ if their distance in $H$ is at most $r$. We also call $H$ the \emph{$r$-th root} of $G$.

There are some problems naturally related to graph powers and graph roots. Suppose $\nicep$ is a class of graphs (possibly consisting of all graphs), $r$ is an integer and $G$ is an arbitrary graph. The questions we ask are:
\begin{itemize}
\item \textit{The recognition problem}: Is $G$ an $r$-th power of some graph from $\nicep$? Formally, we define a family of decision problems:

\begin{tabular}{lll}
  & \textbf{Problem.} & $r$-TH-POWER-OF-$\nicep$-GRAPH\\
  & \textbf{Instance.} & A graph $G$.\\
  & \textbf{Question.} & Is $G=H^r$ for some graph $H\in\nicep$?
\end{tabular}
\item \textit{The $r$-th root problem}: Find some/all $r$-th roots of $G$ which belong to $\nicep$.
\item \textit{The unique reconstruction problem}: Is the $r$-th root of $G$ in $\nicep$ (if any) unique? 
\end{itemize}

The above problems have been investigated for various graph classes $\nicep$. There exist characterizations of squares \cite{Muk} and higher powers \cite{EMR} of graphs, but they are not computationally efficient. Motwani and Sudan \cite{MotSud} proved the NP-completeness of recognizing graph squares and Lau \cite{Lau} extended this to cubes of graphs. Motwani and Sudan \cite{MotSud} suggested that recognizing squares of bipartite graphs is also likely to be NP-complete. This was disproved by Lau \cite{Lau}, who gave a polynomial time algorithm that recognizes squares of bipartite graphs and counts the bipartite square roots of a given graph. Apparently the first proof that $r$-TH-POWER-OF-GRAPH and $r$-TH-POWER-OF-BIPARTITE-GRAPH are NP-complete for any $r\geq 3$ was recently announced in \cite{LeNg}.

Considerable attention has been given to tree roots of graphs, which are quite well understood and can be computed efficiently. Lin and Skiena \cite{Lin} gave a polynomial time algorithm for recognizing squares of trees. The first polynomial time algorithm for $r$-TH-POWER-OF-TREE for arbitrary $r$ was given by Kearney and Corneil \cite{KeaCor}. A faster, linear time algorithm for this problem is due to Chang, Ko and Lu \cite{CKL}. All these algorithms also compute some $r$-th tree root of a given graph (if one exists). It is important to note that such a root need not be unique, not even up to isomorphism, so the difficulty lies in making consistent choices while constructing a root. We are going to use the computation of an $r$-th tree root of a graph as a black-box in our algorithms.

There has also been some work on the complexity of $r$-TH-POWER-OF-$\nicep$-GRAPH for such classes $\nicep$ as chordal graphs, split graphs and proper interval graphs \cite{LauCor} and for directed graphs and their powers \cite{Kutz}.

In this work we address the above problems for another large family of graphs, namely graphs with no short cycles. Recall, that the \emph{girth} of a graph is the length of its shortest cycle. For convenience we shall denote by  $\girth{\geq g}$ the class of all graphs of girth at least $g$, and by $\girthplus{\geq g}$ its subclass consisting of graphs with no vertices of degree one (which we call \emph{leaves}). These classes of graphs make a convenient setting for graph roots because of the possible uniqueness results outlined below.

By \cite{4auth} the recognition of squares of $\girth{\geq 4}$-graphs is NP-complete, while squares of $\girth{\geq 6}$-graphs can be recognized in polynomial time. For $r\geq 3$ no complexity-theoretic results have been known, but there is some very interesting work on the uniqueness of the roots. Precisely, Levenshtein et al. \cite{Lev+} proved that if $G$ has a square root $H$ in the class  $\girthplus{\geq 7}$, then $H$ is unique\footnote{It is not possible to obtain uniqueness if the vertices of degree one are allowed, hence this technical restriction. See \cite{Lev+} for details.}. The same statement was extended in \cite{Lev} to $r$-th roots in $\girthplus{\geq 2r+2\lceil(r-1)/{4}\rceil+1}$. The main conjecture in this area remains unresolved:
\begin{conjecture}[Levenshtein, \cite{Lev}]
\label{conjecture:zajebiste}
 If a graph $G$ has an $r$-th root $H$ in $\girthplus{\geq 2r+3}$, then $H$ is unique in that class.
\end{conjecture}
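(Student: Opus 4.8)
The plan is to prove the stronger statement that $G=H^r$ has \emph{at most one} root in $\girthplus{\geq 2r+3}$ by exhibiting a reconstruction of the edge set of $H$ that refers only to $G$. If I can describe, purely in terms of the graph $G$, a test that decides for every pair of vertices whether they are joined by an edge of $H$, then applying that test to two hypothetical roots $H_1,H_2$ on the common vertex set $V(G)$ immediately yields $E(H_1)=E(H_2)$, and hence $H_1=H_2$.

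First I would extract the rigidity that the girth bound provides. In any $H\in\girth{\geq 2r+3}$ every shortest path of length at most $r+1$ is unique, for otherwise two such paths would close up into a cycle of length at most $2r+2<2r+3$; consequently the ball $B_H(v,r+1)$ induces a tree for each vertex $v$, and the no-leaf hypothesis makes every such vertex branch (degree $\ge 2$) inside it. Thus $H$-distances, and the edges realising them, are completely unambiguous up to radius $r+1$, which is precisely one more than the exponent. This is the engine of the whole argument.

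The core step is an edge test phrased in $G$. By definition $u$ and $v$ satisfy $d_H(u,v)\le r$ exactly when they are adjacent in $G$, so the task is to recognise the case $d_H(u,v)=1$ among all $G$-edges. I would seek a property $\phi(u,v)$ depending only on the induced subgraphs of $G$ on $N_G[u]$, $N_G[v]$ and on their intersection $N_G[u]\cap N_G[v]=B_H(u,r)\cap B_H(v,r)$, designed so that a genuine $H$-edge $uv$ forces the two radius-$r$ balls to overlap along a thick, tree-shaped common region, a configuration that the uniqueness of short geodesics prevents any pair at $H$-distance $2,3,\ldots,r$ from imitating. Equivalently, one can try to recover the \emph{link} (the set of $H$-neighbours) of each vertex directly, which is the formulation the reconstruction naturally takes.

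The hardest part will be proving that this test separates true edges from pairs at $H$-distance between $2$ and $r$ in the \emph{extremal} case of girth exactly $2r+3$, where the tree structure is only barely sufficient and where long induced paths created by degree-$2$ vertices produce overlap patterns that superficially resemble those of a real edge. I expect to resolve this by a case analysis on the shape of the unique shortest $u$--$v$ path, playing the guaranteed branching at its endpoints against the uniqueness of geodesics of length $\le r+1$ to exclude coincidental overlaps. Should a clean closed-form test prove elusive, the fallback is a local-to-global argument: reconstruct each ball $B_H(v,r)$ as an induced subtree of $G$ using the tree-root computation of \cite{KeaCor,CKL} as a black box, prove that this local reconstruction is \emph{forced} --- this is exactly where the girth bound must defeat the known non-uniqueness of tree roots --- and then glue the uniquely determined neighbourhoods. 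Either way the final output is independent of the chosen root, which is the assertion of the Conjecture.
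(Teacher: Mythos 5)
This statement is an open conjecture: the paper does not prove it, and explicitly says so (``The main conjecture in this area remains unresolved''). What the paper actually establishes is weaker --- a polynomial-time algorithm that outputs \emph{all} roots of $G$ in $\girthplus{\geq 2r+3}$, of which there may be up to $\Delta(G)$ (the bound from \cite{Lev}); the paper only remarks that ``conjecturally there is at most one.'' So your proposal should be judged as an attempted proof of an open problem, and it does not close it.

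The concrete gap is the edge test $\phi(u,v)$, which you introduce as something you ``would seek'' and whose correctness in the extremal case you say you ``expect to resolve.'' That is precisely the missing ingredient, and it is where the known machinery stops. The strongest tool available of this kind is the paper's Theorem \ref{theorem:edgedetermines}: the set $N_{x,y}$, computable from $G$ alone, equals $N_H(x)$ --- but \emph{only under the hypothesis that $xy$ is already known to be an edge of $H$}. It is a conditional reconstruction, not an unconditional edge recogniser: without a seed edge it cannot be bootstrapped, which is exactly why the paper's algorithm must try all $y\in B_x$ as candidate neighbours of a fixed $x$ and why it may legitimately output several non-isomorphic roots. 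Your fallback route fares no better: reconstructing each ball $B_H(v,r)$ via tree-root computation is not ``forced,'' since $r$-th tree roots are genuinely non-unique (the paper stresses this and treats the tree-root algorithm as a black box precisely because of it), and the girth hypothesis on $H$ does not obviously kill that ambiguity inside a single ball. Your opening observations (uniqueness of geodesics of length $\leq r+1$, balls inducing trees) are correct and are indeed used throughout the paper, but they only deliver uniqueness of the root \emph{given one edge}; upgrading that to absolute uniqueness is the content of Levenshtein's conjecture, not a lemma one can wave at.
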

The value of $g=2r+3$ is best possible, as witnessed by the cycle $C_{2r+2}$, which cannot be uniquely reconstructed from its $r$-th power.
The best result towards Conjecture \ref{conjecture:zajebiste} is that the number of roots $H$ under consideration is at most $\Delta(G)$ (the maximum vertex degree in $G$, \cite{Lev}), but its proof yields only exponential time $r$-th root and recognition algorithms.

At the same time Farzad et al. made a conjecture about recognizing powers of graphs of lower-bounded girth:
\begin{conjecture}[Farzad et al., \cite{4auth}]
\label{conjecture:debilne}
 The problem $r$-TH-POWER-OF-$\girth{\geq 3r-1}$-GRAPH can be solved in polynomial time.
\end{conjecture}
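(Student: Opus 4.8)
The plan is to exploit the fact that large girth forces $H$ to be \emph{locally} a tree, and to reduce the reconstruction of $H$ to repeated calls of the known polynomial-time $r$-th tree root algorithm as a black box. Observe first that for $r\geq 2$ we have $3r-1>2r$, so if $G=H^r$ with $H\in\girth{\geq 3r-1}$ then every ball of radius $r$ in $H$ induces a tree, and within such a ball the $H$-distances agree with the distances in the induced tree. Hence, locally, $G$ is the $r$-th power of a tree. The algorithm I would design has four phases: (i) from $G$ alone, identify for each vertex $v$ the subgraph of $G$ that coincides with the $r$-th power of the radius-$r$ ball around $v$ in $H$; (ii) run the tree-root black box on this subgraph to read off the edges of $H$ incident to $v$; (iii) assemble these local edge sets into a single candidate graph $H$; and (iv) verify that $H^r=G$ and $\mathrm{girth}(H)\geq 3r-1$, rejecting otherwise.

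The crux is phase (i): recovering, from $G$ only, both the vertex set of the ball and---more delicately---which of the $G$-edges inside it are genuine edges of $H$ rather than ``power edges'' created by taking the $r$-th power. I would prove a structural lemma characterizing $H$-adjacency purely in terms of $G$. The large girth makes the local profiles rigid: for two $H$-adjacent vertices the common neighbourhood $N_G(u)\cap N_G(v)$, together with higher-order distance data, has a predictable shape that differs from the shapes arising when $\dist_H(u,v)\in\{2,\ldots,r\}$. This should let me label every edge of $G$ as a root edge or a power edge. Equivalently, I can let the tree-root black box do the labelling inside each ball and define the $H$-neighbours of $v$ as its neighbours in the returned tree.

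Phases (ii)--(iii) raise the gluing problem: $r$-th tree roots need not be unique, so independently computed local roots might make incompatible choices, and reconciling them is where the girth hypothesis does the real work. I would argue that the slack in $3r-1$ beyond the bare $2r+1$ needed merely to guarantee a tree pins down the local root unambiguously near its centre, so that the neighbour set of each vertex is determined and the local pieces agree on overlaps and glue into one graph $H$. Phase (iv) is a routine polynomial-time check; its correctness rests on the fact that a genuine root must produce mutually consistent local data, so a failure of any check certifies that no root in $\girth{\geq 3r-1}$ exists.

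I expect the main obstacle to be exactly this gluing/uniqueness step, for two intertwined reasons. First, near the \emph{boundary} of a ball a $G$-edge may be a power edge whose witnessing short $H$-path leaves the ball, so the induced subgraph need not be an exact tree power unless the ball radius and the girth bound are chosen with care; squeezing the constant down to $3r-1$ (or below) is the delicate quantitative part. Second, the obstruction is the same one underlying Conjecture~\ref{conjecture:zajebiste}: below the threshold a graph can have genuinely different roots, as $C_{2r+2}$ shows, so any correct proof must invoke the girth bound precisely to exclude such ambiguity. Controlling this is what separates a working algorithm from a heuristic.
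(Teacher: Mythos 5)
Your plan correctly identifies the two real difficulties (deciding which $G$-edges are $H$-edges, and reconciling non-unique local tree roots), but it stops at naming them, and both are exactly where the actual work lies; as written the proposal is a heuristic, not a proof. The first gap is your phase~(i): you hope for a \emph{seed-free, purely local} rule that labels every edge of $G$ as a root edge or a power edge. No such canonical rule can exist in general, because an $r$-th root of girth $\geq 2r+3$ need not be unique -- uniqueness even in $\girthplus{\geq 2r+3}$ is only Conjecture~\ref{conjecture:zajebiste} -- so the neighbourhood of a vertex in $H$ is simply not a function of $G$ alone. The paper circumvents this by \emph{guessing one edge}: Theorem~\ref{theorem:edgedetermines} gives an explicit set $N_{x,y}$, built from the balls $B_v$ (hence from $G$ only), which equals $N_H(x)$ \emph{provided} $xy$ is known to be an edge of $H$; the algorithm then propagates neighbourhoods edge by edge and pays only a factor $O(\Delta(G))$ for enumerating the seed. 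Your sketch contains no substitute for this lemma, and the ``rigid local profile'' you invoke is precisely the statement that needs a proof. The second gap is leaves: inside a pendant tree attached to the core, $r$-th tree roots are genuinely non-unique (not even unique up to isomorphism), so independently computed local roots will \emph{not} agree on overlaps, and no amount of girth fixes that. The paper instead shows that the core/non-core partition (Lemma~\ref{lemma:leaves}), the link of each non-core vertex, and the depth sets $T_v^{(d)}$ are all determined by $G$, and that \emph{any} tree root respecting these depths is compatible with the rest of the graph; each pendant piece then reduces to RESTRICTED-$r$-TH-TREE-ROOT, solved via the tail gadget of Lemma~\ref{lemma:gadget}. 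Finally, note that the paper does not attack girth $3r-1$ directly: it proves the stronger Theorem~\ref{theorem:positive} for girth $\geq 2r+3$ and observes that $3r-1\geq 2r+3$ for $r\geq 4$, running the exhaustive root-finder and checking whether any returned root meets the larger girth bound. Your instinct that the problem should be attacked at a girth threshold independent of the ``local tree'' bound $2r+1$ is right, but the threshold that makes the argument work is $2r+3$, and the mechanism is global seeded propagation rather than local gluing.
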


\paragraph{\textbf{Our contribution.}} Our first result gives an efficient reconstruction algorithm in Levenshtein's case:
\begin{theorem}
\label{theorem:1}
Given any graph $G$, all its $r$-th roots in $\girthplus{\geq 2r+3}$ can be found in polynomial time.
\end{theorem}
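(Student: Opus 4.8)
The plan is to use the girth bound to pin down, for every vertex, its neighbourhood in a prospective root $H\in\girthplus{\geq 2r+3}$, and then to glue these local neighbourhoods into a globally consistent graph. \emph{Local tree structure.} The first step is to record what a girth of $2r+3$ buys locally: in any $H$ of girth at least $2r+3$ the ball $B_H(v,r)$ induces a tree $T_v$ rooted at $v$, and $\dist_H(v,u)$ equals the depth of $u$ in $T_v$ for every $u\in T_v$. Indeed a shortest path out of $v$ stays inside $B_H(v,r)$, and two distinct shortest $v$--$u$ paths, or a ``horizontal'' edge / local maximum on a cycle lying inside the ball, would close up a cycle of length at most $2r<2r+3$. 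Since $N_G[v]=B_H(v,r)$, the vertices we are really after — the set $N_H(v)$ — are exactly the depth-one vertices of $T_v$.

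\emph{Reading $N_H(v)$ off the local power.} The subtlety is that $G[N_G[v]]$ is not literally $T_v^{\,r}$: a sufficiently long cycle (of length at least $2r+3$) passing near the boundary of the ball can create a spurious $G$-edge between two vertices whose tree-distance exceeds $r$ but whose $H$-distance is at most $r$ (as already happens for $H=C_{2r+3}$ when $r\ge 3$). A short computation shows such spurious edges only ever join vertices of large depth, so the $\core$ of the neighbourhood — the vertices of $T_v$ at depth at most $\lfloor r/2\rfloor$ — carries no spurious edges, and there $H$-distance coincides with tree-distance; hence $G$ restricted to the $\core$ is \emph{exactly} the $r$-th power of the corresponding subtree. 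I would therefore feed $G[\core]$ to the known linear-time $r$-th-tree-root algorithm \cite{CKL} as a black box and read $N_H(v)$ off the neighbours of $v$ in the returned tree. Because $r$-th roots of a tree are not unique in general, this local computation determines $N_H(v)$ only up to a controlled ambiguity, and resolving that ambiguity coherently across all vertices is the heart of the matter.

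\emph{Global assembly.} Knowing the correct $N_H(v)$ for every $v$ would determine $H$ outright, so it remains to make consistent choices. Since consecutive balls overlap, the candidate neighbourhood chosen at $v$ must be compatible with those chosen at its neighbours, and I expect the girth rigidity to make this propagation \emph{forced}: once the $\link$ data at a single seed vertex is fixed, the reconstruction should be determined throughout the graph. I would therefore branch over the $O(\Delta(G))$ choices at one seed vertex, run the forced propagation, and finally test each resulting candidate $H$ by verifying directly that $H^{\,r}=G$ and that $H$ has girth at least $2r+3$ and no leaves; distinct genuine roots arise from distinct seeds, so the running time stays polynomial and the number of outputs never exceeds the $\Delta(G)$ bound of \cite{Lev}.

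\emph{Main obstacle.} The hard part is exactly the coupling between the local and the global. I must show (i) that the $\core$ is genuinely large enough that its tree-root determines $N_H(v)$ up to only those branch choices that can survive globally, and (ii) that the per-vertex candidate families, once one seed is fixed, admit a unique forced and mutually consistent completion, so that no more than $O(\Delta(G))$ global solutions occur. This rigidity is precisely the phenomenon behind Levenshtein's uniqueness Conjecture~\ref{conjecture:zajebiste}, and it is where the hypothesis $g\ge 2r+3$, rather than anything weaker, has to be used in full: the borderline graph $C_{2r+2}$ of girth $2r+2$ already fails to be reconstructible from its $r$-th power.
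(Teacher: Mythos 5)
Your outer skeleton matches the paper's algorithm: seed the reconstruction at one vertex, branch over the $O(\Delta(G))$ candidate choices there, propagate outward, and accept only those candidates $H$ that pass a final check $H^r=G$ and $H\in\girthplus{\geq 2r+3}$. But the two steps that would make this work are missing, and the one concrete mechanism you do propose is broken. First, the local step: if you restrict $T_v$ to the vertices of depth at most $\lfloor r/2\rfloor$, then any two of them are at tree-distance at most $2\lfloor r/2\rfloor\leq r$, so the $r$-th power of that subtree is a \emph{complete} graph. Your observation that this restricted subgraph of $G$ is ``exactly the $r$-th power of the corresponding subtree'' is therefore true but vacuous: feeding a clique to the tree-root black box returns an essentially arbitrary tree of diameter at most $r$ on that vertex set, and the neighbours of $v$ in it tell you nothing about $N_H(v)$. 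The ambiguity is total, not ``controlled''.

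Second, the global step --- that fixing the data at one seed forces the reconstruction everywhere --- is exactly the content you defer to the ``main obstacle'' and never supply; it is the entire technical heart of the paper. The paper's Theorem~\ref{theorem:edgedetermines} resolves it with a closed-form, purely set-theoretic formula: from the balls $B_v=N_G(v)\cup\{v\}$ (readable from $G$ alone) one builds $S_{x,y}$, then $P_{x,y}$, then $N_{x,y}$, and proves that $N_{x,y}=N_H(x)$ whenever $xy\in E(H)$. So a single known edge of $H$ determines a full neighbourhood, that neighbourhood supplies new known edges, and the propagation is forced with no further branching and no tree-root subroutine at all (the tree-root black box is only needed later, in Section~\ref{section:with-leafs}, for reattaching leaves). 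Without a lemma of this kind your argument does not establish that the per-vertex candidate families admit a unique consistent completion, so the proof is incomplete: you have correctly identified where the difficulty lies, but the identification of a difficulty is not its resolution.
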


Next, we use this result to deal with the general case, i.e. when the roots are allowed to have leaves. It turns out that the same girth bound of $2r+3$ admits a positive result:

\begin{theorem}
\label{theorem:positive}
The problem $r$-TH-POWER-OF-$\girth{\geq 2r+3}$-GRAPH can be solved in polynomial time.
\end{theorem}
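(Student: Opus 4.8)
The plan is to reduce to Theorem \ref{theorem:1} by stripping the leaves off a hypothetical root until only a leaf-free graph remains. The mechanism that makes this work is that leaves are invisible to distances: if $v$ is a leaf of a root $H$, then $v$ lies on no shortest path between two other vertices, so deleting it changes no distance among the remaining vertices. Consequently $(H-v)^r = H^r - v$, and since removing a vertex cannot decrease the girth, $H-v$ again lies in $\girth{\geq 2r+3}$. Iterating this, if $C=\core(H)$ denotes the subgraph obtained from $H$ by repeatedly deleting degree-one vertices (equivalently, the maximal subgraph of minimum degree at least two), then $C^r$ equals the subgraph of $G$ induced on $V(C)$, and $C$ belongs to $\girthplus{\geq 2r+3}$ whenever it is nonempty. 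Thus any admissible root restricts to a leaf-free root of an induced subgraph of $G$, which is exactly the object Theorem \ref{theorem:1} can recover.

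Accordingly the algorithm runs in three phases. First, I would identify from $G$ alone the set of vertices that must form the core of every admissible root, obtaining the induced subgraph $G_C$ playing the role of $C^r$; when that set is empty the root is a tree and I invoke the known $r$-th tree-root algorithm as a black box. Second, I run the algorithm of Theorem \ref{theorem:1} on $G_C$ to enumerate all its leaf-free roots in $\girthplus{\geq 2r+3}$, which become the candidate cores. Third, for each candidate core $C$ I reattach the pendant trees: for every non-core vertex I must recover the vertex of $C$ to which its pendant tree attaches (its $\link$) and its $\depth$ within that tree, assemble the resulting $H$, and check directly that $H^r=G$ and that $H$ has girth at least $2r+3$.

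The first phase rests on the local structure forced by the girth bound: balls of radius $r$ in $H$ are trees, so a leaf attached to a vertex $u$ has closed neighbourhood in $G$ equal to the ball of radius $r-1$ around $u$, and any two leaves sharing an attachment point become true twins (equal closed neighbourhoods) in $G$. This lets me recognise and peel off pendant vertices through domination and twin relations in $G$, collapsing parallel leaves and stripping the outermost pendant layer repeatedly until $G_C$ is reached. The reattachment in the third phase is constrained the same way: a pendant vertex at distance at most $r$ from the core sees a determined portion of $C$, which pins down its $\link$, while its adjacencies to pendant vertices of the same tree fix its $\depth$.

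I expect the main obstacle to be the interface between core and pendant trees. Concretely, the delicate points are proving that the peeling of the first phase recovers the correct vertex set regardless of which root is eventually chosen, and that the link and depth data of the third phase are uniquely and consistently determined by $G$, so that the reassembled $H$ is forced to satisfy $H^r=G$ without introducing a short cycle. The hardest case is near the boundary, where a core vertex may itself carry leaves and genuine core structure must be told apart from pendant structure; controlling this is precisely where the hypothesis $2r+3$ has to be used to its full strength.
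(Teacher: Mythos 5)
Your overall architecture coincides with the paper's: peel the input down to the power of the core, enumerate leaf-free roots of that part via Theorem \ref{theorem:1}, then reattach pendant trees. The first two phases are essentially sound. The paper's Lemma \ref{lemma:leaves} implements your peeling via the domination relation $B_u\subset B_v$, which identifies in one pass all vertices removed in the first $r$ rounds of leaf deletion; be aware that which vertices are \emph{actual leaves} of $H$ is not determined by $G$ alone (pendant trees admit non-isomorphic roots), so a literal layer-by-layer removal of ``the outermost pendant layer'' is not well defined --- but iterated domination still recovers the root-independent core vertex set, which is all you need.

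The genuine gap is in the third phase. Knowing, for every non-core vertex $u$, its $\link(u)$ and $\depth(u)$ does not determine $H$: it only partitions each pendant set $V(T_v)$ into levels $T_v^{(1)},\ldots,T_v^{(r)},T_v^{(>r)}$, and one must still construct a tree on $V(T_v)$ whose $r$-th power agrees with the induced subgraph of $G$ on $V(T_v)$ \emph{and} whose distance-$d$ levels from $v$ are exactly the prescribed sets (the levels are what guarantee consistency with the edges of $G$ leaving $T_v$). Vertices of equal depth carry no information about which parent they attach to, such trees are in general not unique, and enumerating candidate trees to ``check $H^r=G$ directly'' is not polynomial. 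The paper resolves this by formulating RESTRICTED-$r$-TH-TREE-ROOT and reducing it, via the tail-structure Lemma \ref{lemma:gadget}, to the unrestricted tree-root problem of Kearney--Corneil and Chang--Ko--Lu; your proposal invokes the tree-root black box only when the core is empty, whereas it is needed (in restricted form) for every pendant tree. A second, smaller omission: vertices with $\depth(u)>r$ see no core vertex at all, so your ``determined portion of $C$'' criterion assigns them no link; the paper handles these separately via connected components of $G$ with the shallow pendant vertices removed.
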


Our result proves Conjecture \ref{conjecture:debilne} (for $r\geq 4$) and is in fact stronger. It also improves the result of \cite{LeNg} for $r=3,g=10$. Moreover, our algorithm for this problem is constructive and exhaustive in the sense that it finds ``all'' $r$-th roots in $\girth{\geq 2r+3}$ modulo the non-uniqueness of $r$-th tree roots of graphs, as explained in Section \ref{section:with-leafs}.

These positive results have a hardness counterpart:
\begin{theorem}
\label{theorem:girth}
The problem $r$-TH-POWER-OF-$\girth{\geq g}$-GRAPH is NP-complete for $g\leq r+1$ when $r$ is odd and $g\leq r+2$ when $r$ is even.
\end{theorem}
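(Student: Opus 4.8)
The plan is to establish both directions of NP-completeness, with almost all the work going into hardness. The first step, placing the problem in NP, is routine: a root $H$ shares its vertex set with $G$, so it is a polynomial-size certificate, and given $H$ one verifies $G=H^r$ by all-pairs BFS and checks the girth bound by a shortest-cycle computation, both in polynomial time. For hardness I would exhibit, for every admissible $g$, a polynomial-time many-one reduction from a constraint-satisfaction problem offering a clean two-valued choice together with a not-all-equal constraint (monotone not-all-equal 3-SAT is convenient), mapping a formula $\varphi$ to a graph $G_\varphi$ so that $G_\varphi$ has an $r$-th root in $\girth{\geq g}$ if and only if $\varphi$ is satisfiable.

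The design principle is that short cycles collapse to cliques under the $r$-th power: for $\ell\le 2r+1$ one has $C_\ell^r=K_\ell$, and many different small graphs $H$ of controlled girth can produce the same clique in $G_\varphi$. I would exploit this ambiguity to build a variable gadget realizing a prescribed clique in $G_\varphi$ that admits exactly two roots of girth $\ge g$ — a cycle that can be closed in one of two ways — and a clause gadget, glued to the literal gadgets, in which the prescribed adjacencies and non-adjacencies of $G_\varphi$ can be reproduced by a girth-$\ge g$ root precisely when its literals are not all assigned the same value. Padding paths, chosen long enough that distinct gadgets never come within distance $r$ in any candidate root yet short enough to leave the girth untouched, would isolate the gadgets so that the power relation factors through them.

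The threshold in the statement is exactly the geometric boundary of this scheme, which is what I would pin down next. A clean binary choice realized right at the girth limit forces the critical cycle of a gadget to have even length, and the largest even length that still produces the desired clique-and-choice pattern under $H\mapsto H^r$ is the smallest even integer exceeding $r$, namely $r+1$ when $r$ is odd and $r+2$ when $r$ is even (equivalently $2\lceil(r+1)/2\rceil$). Every $g$ below this bound is then handled by the same gadget with its critical cycle shortened to length $g$; specializing to $r=2$ recovers the known NP-completeness of squares of $\girth{\geq 4}$-graphs, a useful consistency check on the construction.

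The crux, and where essentially all the difficulty lies, is the rigidity (soundness) analysis: I must show that \emph{every} root in $\girth{\geq g}$ of $G_\varphi$, not merely the intended ones, decomposes into the prescribed gadget configurations, so that roots correspond to satisfying assignments and none appear for unsatisfiable $\varphi$. Concretely this means proving a \emph{locality lemma} — that whether $xy\in E(G_\varphi)$ is determined solely by the configurations of the gadgets containing $x$ and $y$ — and ruling out rogue roots that exploit the small girth in unintended ways, the delicate point being the control of long-range $H^r$-adjacencies across the padding. A secondary, finite but parity-sensitive, task is to verify that the intended $H$ genuinely has girth at least $g$ and genuinely powers to $G_\varphi$ for both parities of $r$.
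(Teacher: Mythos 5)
Your outline correctly identifies the overall shape of the argument (membership in NP is routine; hardness by reduction from a set-splitting/not-all-equal problem with a binary choice per element; the girth threshold $2\lceil (r+1)/2\rceil$ arising as the length of the shortest cycle in the intended root), and your parity analysis of the bound matches the paper's: the intended roots there have girth exactly $2\lfloor r/2\rfloor+2$, which is $r+1$ for odd $r$ and $r+2$ for even $r$. But the proposal stops exactly where the proof begins. You yourself flag the soundness direction --- showing that a girth-$\geq g$ root of $G_\varphi$ can only exist when $\varphi$ is satisfiable --- as ``where essentially all the difficulty lies,'' and then offer no mechanism for it beyond the wish for a ``locality lemma.'' That is the genuine gap. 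The specific missing idea is the \emph{tail structure} (Lemma \ref{lemma:gadget}): attaching to each set-vertex $S_j$ a path $S_j\edge S_j^{(1)}\edge\cdots\edge S_j^{(r)}$ whose image in $G$ forces, in \emph{every} root $H$ of $G$ (with no girth hypothesis needed), the entire $d$-neighbourhood of $S_j$ for each $d=1,\ldots,r$. This is what lets the paper read off, from the prescribed adjacencies of the colour vertices $A,B$ in $G_\nices$, that each $S_j$ must have an element at distance $k$ from $A$ and one at distance $k$ from $B$ --- i.e.\ a valid 2-colouring --- without ever having to classify ``rogue roots.'' Without such a forcing gadget, the rigidity analysis you defer is not a finite verification but the whole theorem.

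Two further points in your sketch would not survive contact with the details. First, the proposed isolation of gadgets by ``padding paths long enough that distinct gadgets never come within distance $r$ in any candidate root'' is in tension with how these reductions actually couple choices: in the paper's construction the colour vertices $A,B$ are within distance $r$ of almost every vertex, and the constraint is transmitted precisely by which pairs are at distance \emph{exactly} $2k$ versus $2k+1$ from them (together with the single crucial non-edge $A\not\edge B$). A construction in which gadgets are metrically isolated in the root cannot encode a not-all-equal constraint across literals. Second, your ``design principle'' that the binary choice comes from two ways of closing a short cycle into a clique is not what drives the known constructions (including the $r=2$, girth-$4$ case you cite as a consistency check); there the choice is which of two colour vertices a pendant path attaches to, and the power graph is engineered to be independent of that choice. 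So while the target of the reduction and the girth threshold are right, the proposal as written is a research plan, not a proof, and the plan's one concrete structural suggestion points in a direction that would need to be replaced by the tail-structure mechanism to be completed.
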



The paper is structured as follows. First we prove some auxiliary results, useful both in the construction of algorithms and in the hardness result. Section \ref{section:algorithm} contains the main algorithm from Theorem \ref{theorem:1}, which is then used in Section \ref{section:with-leafs} as a building block of the general recognition algorithm from Theorem \ref{theorem:positive}. NP-completeness is proved in Section \ref{section:large-girth}. 

\section{Auxiliary results}
\label{section:reductions}

Let us fix some terminology. By $\dist_H(u,v)$ we denote the distance from $u$ to $v$ in $H$. The \emph{$d$-neighbourhood} of a vertex $u$ in $H$ is the set of vertices of $H$ which are exactly in distance $d$ from $u$. The $1$-neighbourhood (i.e. the set of vertices adjacent to $u$) will be denoted $N_H(u)$.

Our setup usually involves a pair of graphs $G$ and $H$ on a common vertex set $V$ such that $G=H^r$. We adopt the notation
$$B_v:=\{u\in V: \dist_H(u,v)\leq r\}=N_G(v)\cup\{v\}$$
for $v\in V$ (the letter $B$ stands for ``ball'' of radius $r$ in $H$). The lack of explicit reference to $r$ and $H$ in this notation should not lead to confusion. It is important that $B_v$ depend only on $G$.

Almost all previous work on algorithmic aspects of graph powers \cite{MotSud,4auth,Lau,LauCor,LeNg} makes use of a special gadget, called \emph{tail structure}, which, applied to a vertex $u$ in $G$, ensures that in any $r$-th root $H$ of $G$ this vertex has the same, pre-determined neighbourhood. Our main observation is that in fact such a tail structure carries a lot more information about $H$. It pins down not just $N_H(u)$, but also each $d$-neighbourhood of $u$ in $H$ for $d=1,\ldots,r$.

\begin{lemma}
\label{lemma:gadget}
Let $G=H^r$ and suppose that $\{v_0,v_1,\ldots,v_r\}\subset V$ is a set of vertices such that $N_G(v_r)=\{v_{r-1},\ldots,v_1,v_0\}$ and $N_G(v_{i+1})\subset N_G(v_{i})$ for all $i=0,\ldots,r-1$, where the inclusions are strict.
\footnote{This assumption (strictness of inclusions) can be removed at the cost of a more complicated statement, but this generality is not needed here.}

Then the subgraph of $H$ induced by $\{v_0,v_1,\ldots,v_r\}$ is a path $v_0\edge v_1\edge\ldots\edge v_r$ and the $d$-neighbourhood of $v_0$ in $H$ is precisely
$$N_G(v_{r-d})\setminus N_G(v_{r-d+1})\cup\{v_d\}$$
for all $d=1,\ldots,r$.
\end{lemma}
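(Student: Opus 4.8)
The plan is to recover the local geometry of $H$ around the gadget entirely from the ball data $B_v=N_G(v)\cup\{v\}$. It is cleanest to phrase everything through these balls; in particular I read the nesting hypothesis as $B_{v_{i+1}}\subsetneq B_{v_i}$, which is also what keeps the boundary bookkeeping honest (see the last paragraph). First I would pin down the ball of the tip $v_r$. By hypothesis $B_{v_r}=\{v_0,\ldots,v_r\}$ has exactly $r+1$ elements, while strict nesting yields a vertex in $B_{v_{r-1}}\setminus B_{v_r}$, i.e.\ a vertex at distance $>r$ from $v_r$. Since $H$ is connected, a shortest path from $v_r$ to such a vertex meets every distance shell, so there is at least one vertex at each distance $0,1,\ldots,r$ from $v_r$; as these $r+1$ shells are disjoint and contained in the $(r+1)$-element set $B_{v_r}$, each shell is a singleton. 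Setting $a_0=v_r$ and letting $a_k$ ($1\le k\le r$) be the unique vertex with $\dist_H(v_r,a_k)=k$, a routine argument (a vertex at distance $k$ must have a neighbour at distance $k-1$, and no other candidates exist) shows $B_{v_r}$ induces exactly the path $v_r\edge a_1\edge\cdots\edge a_r$.

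Next I would establish an ``enter at the base'' property: any $u\notin B_{v_r}$ is adjacent in $H$ only to $a_r$ among $a_0,\ldots,a_r$, since an edge to $a_j$ with $j<r$ would give $\dist_H(u,v_r)\le j+1\le r$, contradicting $u\notin B_{v_r}$. Consequently
$$\dist_H(u,a_k)=\dist_H(u,a_r)+(r-k)\qquad(0\le k\le r).$$
The upper bound is clear; for the lower bound, truncate a shortest $u$--$a_k$ path at its first ball vertex, which by the previous sentence must be $a_r$.

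Now I would orient the path, i.e.\ show $a_k=v_{r-k}$, which is the crux. For each $i$ pick $w_i\in B_{v_i}\setminus B_{v_{i+1}}$ (nonempty by strictness). Nesting forces $w_i\notin B_{v_j}$ for all $j\ge i+1$, in particular $w_i\notin B_{v_r}$, so $w_i$ lies outside the ball and the displayed formula applies. Writing $v_i=a_{k_i}$ and $\delta_i=\dist_H(w_i,a_r)$, the relations $\dist_H(w_i,v_i)\le r$ and $\dist_H(w_i,v_{i+1})\ge r+1$ become $\delta_i\le k_i$ and $\delta_i\ge k_{i+1}+1$, hence $k_i>k_{i+1}$. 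Since the values $k_0,\ldots,k_{r-1}$ are a permutation of $1,\ldots,r$ (each shell being a singleton), the strict chain $k_0>\cdots>k_{r-1}$ forces $k_i=r-i$. This simultaneously yields the induced path $v_0\edge v_1\edge\cdots\edge v_r$ and all distances $\dist_H(v_i,v_j)=|i-j|$.

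Finally I would read off the $d$-neighbourhood. The tail contributes $v_d$, the unique ball vertex at distance $d$ from $v_0=a_r$. Any other vertex $u$ at distance $d$ from $v_0$ lies outside $B_{v_r}$, and since $v_{r-d}=a_d$ and $v_{r-d+1}=a_{d-1}$, the enter-at-base formula gives $\dist_H(u,v_{r-d})=r$ and $\dist_H(u,v_{r-d+1})=r+1$; thus $u\in N_G(v_{r-d})\setminus B_{v_{r-d+1}}$, and conversely these two distance equations pin $\dist_H(u,v_0)=d$. I expect the main obstacle to be precisely this boundary bookkeeping together with the global step behind it. The set $N_G(v_{r-d})\setminus N_G(v_{r-d+1})$ still contains the single tail vertex $v_{r-d+1}$, so one must subtract the closed neighbourhood $B_{v_{r-d+1}}$ rather than the open one before adjoining $v_d$; and the conceptual heart is the passage from purely local ball data to the global pendant-path structure (the enter-at-base step), which is exactly what turns the local inequalities into exact $d$-neighbourhoods.
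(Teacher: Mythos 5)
Your proof is correct and follows essentially the same route as the paper's: establish that $B_{v_r}=\{v_0,\ldots,v_r\}$ induces a pendant path that meets the rest of $H$ only at $v_0$, orient that path using the strict nesting, and read off the $d$-neighbourhoods from the resulting distance formula (the paper compresses the orientation step and the final step into one sentence each, while you supply the quantitative details). Your side observation is also a worthwhile catch: read literally with open neighbourhoods the nesting hypothesis cannot hold (e.g.\ $N_G(v_r)\subset N_G(v_{r-1})$ would force $v_{r-1}\in N_G(v_{r-1})$) and the displayed conclusion acquires the spurious vertex $v_{r-d+1}$, so the statement should indeed be read in terms of the closed neighbourhoods $B_v$, exactly as you do.
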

\begin{proof}
The subgraph $K$ of $H$ induced by $\{v_0,\ldots,v_r\}$ is connected --- otherwise $N_G(v_r)$ would contain vertices from outside $K$. Consider any vertex $u$ of $K$ that has an edge to some vertex $w$ outside $K$. Clearly, $\dist_K(v_r,u)=r$, since otherwise $w$ would be in $N_G(v_r)$. This means that $K$ is a path from $v_r$ to $u$ and $u$ is the only vertex of that path which has edges to vertices outside $K$. The condition $N_G(v_{i+1})\subset N_G(v_{i})$ now implies that the vertices of this path are arranged as in the conclusion of the lemma. The second conclusion follows easily.
\end{proof}
Note that the tail structure itself does not enforce any extra constraints on $H$ other than the $d$-neighbourhoods of $v_0$.

In the algorithm for $r$-TH-POWER-OF-$\girth{\geq 2r+3}$-GRAPH we will need to solve the following tree root problem with additional restrictions imposed on the $d$-neighbourhoods of a certain vertex:\\

\begin{tabular}{lll}
  & \textbf{Problem.} & RESTRICTED-$r$-TH-TREE-ROOT\\
  & \textbf{Instance.} & A graph $G$, $r\geq 2$, a vertex $v\in V(G)$ and a partition\\
  & & $V(G)=\{v\}\cup T^{(1)}\cup\ldots\cup T^{(r)}\cup T^{(>r)}$.\\
  & \textbf{Question.} & Is $G=T^r$ for some tree $T$ such that the\\
  & & $d$-neighbourhood of $v$ in $T$ is exactly $T^{(d)}$ for $d=1,\ldots,r$?
\end{tabular}

\begin{lemma}
There is a constructive polynomial time algorithm for RESTRICTED-$r$-TH-TREE-ROOT.
\end{lemma}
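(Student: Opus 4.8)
The plan is to reduce RESTRICTED-$r$-TH-TREE-ROOT to the ordinary $r$-th tree root problem, which we may use as a black box. Given an instance $(G,r,v,T^{(1)},\ldots,T^{(r)},T^{(>r)})$, I would first run the obvious consistency checks that any tree root must satisfy: $\bigcup_{d=1}^{r}T^{(d)}$ must equal $N_G(v)$, the set $T^{(>r)}$ must equal $V(G)\setminus(\{v\}\cup N_G(v))$, and (since levels of a tree rooted at $v$ are contiguous) no level $T^{(d)}$ may be empty while a deeper level is nonempty; if any check fails I reject. The heart of the reduction is to attach to $v$ a \emph{tail gadget} that simultaneously encodes the prescribed distances and is rigid in every root. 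Concretely I build a graph $G'$ on the vertex set $V(G)\cup\{w_1,\ldots,w_r\}$ by keeping all edges of $G$, making $\{v,w_1,\ldots,w_r\}$ a clique, and joining $w_i$ to a vertex $u\in V(G)$ exactly when $u\in T^{(1)}\cup\cdots\cup T^{(r-i)}$. The intended root of $G'$ is obtained from a solution $T$ by hanging the pendant path $v\edge w_1\edge\cdots\edge w_r$ off $v$.

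The forward direction is a direct distance computation. Attaching a pendant path does not change distances inside $V(G)$, so the restriction of the intended $T'$ to $V(G)$ is exactly $T$ and $(T')^r$ agrees with $G$ on $V(G)$. For the new vertices, $\dist_{T'}(w_i,u)=i+\dist_T(v,u)$ for $u\in V(G)$, hence $w_i$ lies within distance $r$ of $u$ iff $\dist_T(v,u)\le r-i$, i.e. iff $u\in T^{(1)}\cup\cdots\cup T^{(r-i)}$; and the $w_i$ together with $v$ are pairwise within distance $r$, so they form a clique. This reproduces precisely the edges I put into $G'$, so $(T')^r=G'$.

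The backward direction is where the gadget must be shown rigid, and this is the step I expect to be the main obstacle. Given any tree $T'$ with $(T')^r=G'$, I must conclude that $w_1,\ldots,w_r$ form a pendant path attached at $v$ in this exact order, so that deleting them leaves a tree $T$ on $V(G)$ with $(T)^r=G$ and with the correct $d$-neighbourhoods of $v$. The argument is of the type established in Lemma~\ref{lemma:gadget}: since $N_{G'}(w_r)=\{v,w_1,\ldots,w_{r-1}\}$ forces $w_r$ to sit at the end of a length-$r$ path into the bulk, and since the sets $N_{G'}(w_i)\cap V(G)=\{v\}\cup T^{(1)}\cup\cdots\cup T^{(r-i)}$ are nested, the configuration is pinned down and each level is recovered as a set difference, $T^{(d)}=\bigl(N_{G'}(w_{r-d})\cap V(G)\bigr)\setminus\bigl(N_{G'}(w_{r-d+1})\cap V(G)\bigr)$, giving $\dist_T(v,u)=d$ exactly for $u\in T^{(d)}$. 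The subtlety I must treat carefully is that the $w_i$ form a clique, so the inclusions among the \emph{full} neighbourhoods $N_{G'}(w_i)$ are not strict; I would therefore invoke the inclusion-free version of the gadget lemma mentioned in the footnote after Lemma~\ref{lemma:gadget}, or re-run its short argument directly for this gadget, reading distances off the nested sets $N_{G'}(w_i)\cap V(G)$ rather than the full neighbourhoods, and I must confirm that the ordering of the $w_i$ along the path is forced (ties, which can only occur at empty levels already excluded by the consistency checks, do not affect the restriction to $V(G)$).

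Rigidity makes the construction an equivalence: the original instance is solvable iff $G'$ is an $r$-th power of a tree, and every tree root of $G'$ restricts to a valid constrained root of $G$. It therefore suffices to call the black-box tree root algorithm on $G'$ and, on success, output $T=T'\setminus\{w_1,\ldots,w_r\}$. The whole procedure is constructive and runs in polynomial time, since the gadget adds only $r$ vertices and the black box is polynomial.
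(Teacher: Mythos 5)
Your overall strategy --- reduce to the unrestricted tree root problem by hanging a rigid tail gadget at $v$ whose $i$-th vertex is joined exactly to $\{v\}\cup T^{(1)}\cup\cdots\cup T^{(r-i)}$, then call the black box and strip the tail --- is exactly the paper's. The difference is that the paper attaches \emph{two} such tails $\{w_1,\ldots,w_r\}$ and $\{u_1,\ldots,u_r\}$ with the cross edges $w_i\edge u_j$ for $i+j\leq r$, and this is not a cosmetic choice: the cross edges guarantee that the inclusions required by Lemma \ref{lemma:gadget} are strict no matter what the sets $T^{(d)}$ are (one always has $u_{r-i}\in N_{G'}[w_i]\setminus N_{G'}[w_{i+1}]$), so the gadget is rigid unconditionally. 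You correctly identify strictness as the danger point of your single-tail version, but your resolution of it is wrong. Trailing empty levels ($T^{(d)}=\emptyset$ for all $d\geq d_0$ with some $d_0\leq r$) survive your consistency checks --- they must, since such instances can be solvable --- and in that situation the ties \emph{do} affect the restriction to $V(G)$, because $v$ itself becomes a true twin of $w_1,\ldots,w_{r-d_0+1}$ in $G'$ and can trade places with one of them in a tree root.

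Concretely, take $r=2$, $G$ a single edge $v\edge a$, $T^{(1)}=\{a\}$, $T^{(2)}=T^{(>2)}=\emptyset$. Your $G'$ is $K_4$ minus the edge $w_2\edge a$, and both $a\edge v\edge w_1\edge w_2$ and $a\edge w_1\edge v\edge w_2$ are tree square roots of it. If the black box returns the second, deleting $w_1,w_2$ leaves the edgeless graph on $\{v,a\}$, which is not a root of $G$ at all; so your claim that every tree root of $G'$ restricts to a valid constrained root of $G$ fails, and your algorithm can output garbage (or, if it verifies its output and rejects, can wrongly reject a solvable instance, since the black box hands you only one root). You would have to handle the degenerate instances separately, or post-process the returned root by permuting true twins, or --- most simply --- adopt the paper's second tail, which supplies the strictness for free.
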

\begin{proof}
Define an auxiliary graph $G'$ by 
\begin{align*}
V(G')= V(G) & \cup\{w_1,\ldots,w_r\}\cup\{u_1,\ldots,u_r\}\\
E(G')= E(G) & \cup\{w_i\edge w_j, u_i\edge u_j \textrm{ for all } i,j=1,\ldots,r\}\\
&  \cup\{w_i\edge u_j \textrm{ if } i+j\leq r\}\\
&  \cup\{w_i\edge v, u_i\edge v \textrm{ for all } i=1,\ldots,r\}\\
& \cup\{w_i\edge x, u_i\edge x \textrm{ for all } x\in T^{(j)} \textrm{ if } i+j\leq r\}
\end{align*}

We claim that the instance of RESTRICTED-$r$-TH-TREE-ROOT has a solution if and only if $G'$ has an $r$-th tree root (with no restrictions). Indeed, if our instance is solvable, then the solution can be turned into an $r$-th root of $G'$ by appending two paths $v\edge w_1\edge\ldots\edge w_r$ and $v\edge u_1\edge\ldots\edge u_r$ at $v$. On the other hand both sets $\{v,w_1,\ldots,w_r\}$ and $\{v,u_1,\ldots,u_r\}$ satisfy the assumptions of Lemma \ref{lemma:gadget}
\footnote{We used two paths just to ensure that the inclusions in Lemma \ref{lemma:gadget} are strict regardless of how small the rest of the graph might be. Again, with a more complicated statement of that lemma one path would suffice.}
, which implies that any $r$-th root of $G'$ has those two paths as induced subgraphs and that the $d$-neighbourhood of $v$ is $T^{(d)}\cup\{w_d,u_d\}$ for $d=1,\ldots,r$. It means that a solution to the instance of RESTRICTED-$r$-TH-TREE-ROOT can be obtained by searching for any $r$-th tree root of $G'$ and omitting the vertices $w_i, u_i$. For this we can use the algorithms of \cite{KeaCor,CKL}.
\end{proof}

\section{Algorithm for roots in $\girthplus{\geq 2r+3}$}
\label{section:algorithm}
In this section we present the algorithm from Theorem \ref{theorem:1}, that is the polynomial time reconstruction of all $r$-th roots in $\girthplus{\geq 2r+3}$ of a given graph $G$. There are two structural properties of graphs $H\in\girthplus{\geq 2r+3}$ that will be used freely throughout the proofs:
\begin{itemize}
\item Every $x\in V(H)$ is of degree at least 2 and the subgraph of $H$ induced by $B_x$ is a tree. This holds since any cycle in $H$ within $B_x$ would have length at most $2r+1$. We shall depict the ball $B_x$ in $H$ in the tree-like fashion.
\item If there is a simple path from $u$ to $v$ in $H$ of length exactly $r+1$ or $r+2$ then $u\not\in B_v$. Indeed, $u\in B_v$ iff there is a path of length at most $r$ from $u$ to $v$ in $H$, and combined with the first path this would yield a cycle of length at most $2r+2$.
\end{itemize}

To describe the algorithm we introduce the following sets defined for every $x,y \in V$.
\begin{align*}
S_{x,y}&=B_x\cap B_y\setminus\bigcup_{v\in B_y\setminus B_x} B_v \setminus \{x\}\\
P_{x,y}&=B_x\cap B_y\cap\bigcup_{v\in S_{x,y}} B_v\\
N_{x,y}&=B_x\cap B_y\cap\bigcap_{v\in P_{x,y}} B_v \setminus \{x\}
\end{align*}

These sets become meaningful if we compute them for the endpoints of an actual edge in some $r$-th root of $G$. Precisely:

\begin{theorem}
\label{theorem:edgedetermines}
Suppose $G=H^r$ for a graph $H\in\girthplus{\geq 2r+3}$ and $xy\in E(H)$. Then
$$N_{x,y}=N_H(x).$$
\end{theorem}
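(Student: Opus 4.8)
The plan is to reconstruct $N_H(x)$ from the ball data by organizing $B_x\cap B_y$ around the edge $xy$. The starting point is a side decomposition. Using the girth bound I would first show that for every $w\in B_x\cap B_y$ the distances $\dist_H(w,x)$ and $\dist_H(w,y)$ differ by exactly one: if they were equal, two geodesics from $w$ together with the edge $xy$ would close up a cycle of length at most $2r+1<2r+3$. This splits $B_x\cap B_y\setminus\{x,y\}$ into an $x$-side (vertices closer to $x$) and a $y$-side (closer to $y$), with $y$ and the other neighbours of $x$ sitting on their respective sides. I would also record the tree structure of the radius-$(r+1)$ balls, which holds because $2r+3>2(r+1)$.

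Next I would pin down the three auxiliary sets in these terms. The set $B_y\setminus B_x$ consists exactly of the ``far'' $y$-side vertices, at distance $r$ from $y$ (hence $r+1$ from $x$). I would argue that $S_{x,y}$ lives on the $x$-side: every $y$-side vertex $w$ of $B_x\cap B_y$ can be pushed one or more steps deeper (min-degree $\ge 2$ forbids dead ends) to a far $y$-side vertex lying within distance $r$ of $w$, so $w$ is deleted, while an $x$-side vertex near $x$ survives. Consequently $P_{x,y}$ gathers the points of $B_x\cap B_y$ within distance $r$ of this $x$-side core, and $N_{x,y}$ is the common $r$-ball intersection over $P_{x,y}$.

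The theorem then splits into two inclusions. For $N_H(x)\subseteq N_{x,y}$ it suffices to show every $v\in P_{x,y}$ has $\dist_H(v,x)\le r-1$; then for any neighbour $u$ of $x$ the triangle inequality gives $\dist_H(u,v)\le 1+(r-1)=r$, so $u\in B_v$ for all $v\in P_{x,y}$. For the reverse inclusion I would take $u\in B_x\cap B_y\setminus\{x\}$ with $u\notin N_H(x)$ and exhibit a witness $v\in P_{x,y}$ together with a simple path from $u$ to $v$ of length exactly $r+1$ running through the edge $xy$ (if $u$ is on the $x$-side, steer the path into the $y$-side, and vice versa); the structural property that a simple path of length $r+1$ or $r+2$ forces distance $>r$ then yields $u\notin B_v$, hence $u\notin N_{x,y}$. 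Min-degree again guarantees the witness exists at the required depth.

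The main obstacle is that large girth does \emph{not} make cross-side distances additive through $xy$: a ``shortcut'' path between the two sides can exist whenever it closes up a shortest cycle, which already happens for $H=C_{2r+3}$. Thus the clean description of $S_{x,y}$ and $P_{x,y}$ above is only the generic picture, and the real work is to make the two inclusions robust to such shortcuts. The key tool I would use is a girth dichotomy: for the canonical simple path of length $D$ joining two opposite-side vertices, either the distance equals $D$ or it is at least $(2r+3)-D$, since any shorter path closes a cycle with that canonical path. The most delicate step --- and the place where the exact bound $2r+3$ is used, rather than anything smaller --- is proving that $P_{x,y}$ meets none of the $y$-side vertices at distance exactly $r$ from $x$; without this a neighbour of $x$ ``pointing the wrong way'' would be erroneously excluded, so this is where I expect to concentrate the effort.
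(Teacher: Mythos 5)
Your plan is essentially the paper's own proof: the same two-sided decomposition of $B_x\cup B_y$ into trees rooted at $x$ and $y$, the same identification of $S_{x,y}$ as sitting on the $x$-side and containing the neighbours of $x$, the same characterization of $P_{x,y}$ as everything in $B_x\cap B_y$ except the $y$-side vertices at distance $r$ from $x$, and the same two inclusions (triangle inequality one way, a simple witness path of length $r+1$ through the edge $xy$ the other way); you also correctly locate the crux at excluding those distance-$r$ vertices from $P_{x,y}$, which the paper handles by showing a short $u$--$v$ path would have to exit through $B_y\setminus B_x$ and contradict the definition of $S_{x,y}$. One small correction: your side remark that radius-$(r+1)$ balls induce trees is false (in $C_{2r+3}$ such a ball is the whole cycle), but you never actually use it --- the correct structural fact is that $B_x\cup B_y$ induces two trees joined by the edge $xy$, and your ``girth dichotomy'' is exactly the right tool for the shortcuts that live outside it.
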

\begin{proof}
Because of the girth condition the set $B_x\cup B_y$ in $H$ consists of two disjoint trees $T_x$ and $T_y$, rooted in $x$ and $y$ respectively and connected by the edge $xy$ (see Fig.\ref{fig:2}). Let us introduce some subsets of those trees. By $W_x$ and $W_y$ denote the last levels:
$$W_x=\{u\in T_x: \dist_H(u,x)=r\}, \quad W_y=\{u\in T_y: \dist_H(u,y)=r\},$$
by $P_x$ and $P_y$ the next-to-last levels:
$$P_x=\{u\in T_x: \dist_H(u,x)=r-1\}, \quad P_y=\{u\in T_y: \dist_H(u,y)=r-1\},$$
and by $N_x$ and $N_y$ the children of $x$ and $y$ in $T_x$ and $T_y$:
$$N_x=\{u\in T_x: \dist_H(u,x)=1\},\quad N_y=\{u\in T_y: \dist_H(u,y)=1\}.$$
Clearly $B_x\cap B_y=(T_x\setminus W_x)\cup (T_y\setminus W_y)$, $W_x=B_x\setminus B_y$ and $W_y=B_y\setminus B_x$. Note that if $r=2$ we have $N_x=P_x$ and $N_y=P_y$.

\begin{figure}[h!]
\includegraphics{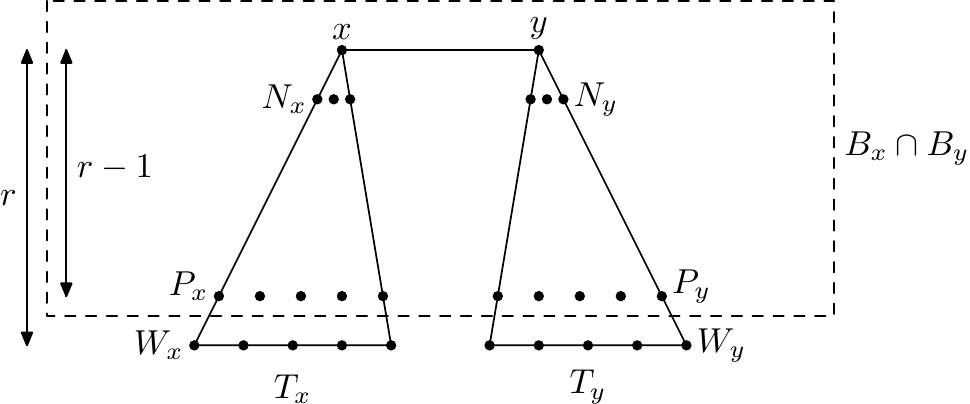}
\caption{The subgraph of $H$ induced by $B_x\cap B_y$.}
\label{fig:2}
\end{figure}

First observe that every $u\in N_x$ and every $v\in B_y\setminus B_x=W_y$ are connected by a path of length $r+2$. It follows that $u\not\in B_v$, which implies
$$N_x\subset S_{x,y}.$$

It is also clear that $S_{x,y}\subset T_x$ (because every vertex in $T_y$ has a descendant $v\in W_y$).

Now the sum $\bigcup_{v\in S_{x,y}} B_x\cap B_y\cap B_v$ contains $\bigcup_{v\in N_x}B_x\cap B_y\cap B_v = (B_x\cap B_y)\setminus P_y$. On the other hand, if $v\in S_{x,y}$ and $u\in P_y$ then $u\not\in B_v$. Indeed, if $u\in B_v$ then there would be a path from $u$ to $v$ of length at most $r$. This path cannot be contained in $T_x\cup T_y$ (because $\dist_H(u,x)=r$, so one can only get as far as $x$ going from $u$), hence it must exit $T_y$ through $W_y$ and then enter $T_x$ through $W_x$, finally reaching $v\in S_{x,y}$. However, that yields a path from $W_y$ to $S_{x,y}$ of length at most $r$ (in fact at most $r-1$), contradicting the definition of $S_{x,y}$. Eventually we proved
$$P_{x,y}=(B_x\cap B_y)\setminus P_y.$$

Now we have $\{y\}\cup N_x\subset N_{x,y}$ because every vertex of $\{y\}\cup N_x$ is in distance at most $r$ from all the vertices of $(B_x\cap B_y)\setminus P_y$. On the other hand, for every vertex $u$ of $B_x\cap B_y$ that is not in $N_x\cup\{x,y\}$ one can find a path of length $r+1$ that starts in $u$ and ends in a vertex $v\in (B_x\cap B_y)\setminus P_y$. Then $u\not\in B_v$, so $u\not\in N_{x,y}$. Such a path is obtained by going from $u$ up the tree it is contained in ($T_x$ or $T_y$) and then down in the other tree.

Concluding, we have identified $N_{x,y}$ to be $N_x\cup\{y\}$, as required.
\end{proof}

The previous theorem should be understood as follows. Given a graph $G$, we want to find its $r$-th root $H$. If we fix at least one edge $xy$ of $H$ in advance, we can compute the neighbourhood $N_H(x)$ of $x$ using only the data available in $G$. But then we can move on in the same way, computing the neighbours of those neighbours etc.

\begin{algorithm}[h!]
\caption{\textbf{Input:} $G$,$r$. \textbf{Output:} All $r$-th roots of $G$ in $\girthplus{\geq 2r+3}$}
\begin{algorithmic}
\STATE pick a vertex $x$ with smallest $|B_x|$
\FOR{all $y$ in $B_x$}
 \STATE $H=$reconstructFromOneEdge($G,xy$)
 \STATE \textbf{if} $H\in\girthplus{\geq 2r+3}$ and $H^r=G$ \textbf{output} $H$
\ENDFOR
\end{algorithmic}

\begin{algorithmic}
\STATE reconstructFromOneEdge($G,e$):
 \STATE $H=(V(G),\{e\})$
 \STATE \textbf{for} all $u\in V$ \textbf{set} processed[$u$]$:=$\textbf{false}
 \WHILE{$H$ has an unprocessed vertex $x$ of degree at least $1$}
   \STATE $y$ = any neighbour of $x$ in $H$
   \STATE $E(H)=E(H)\cup\{xz \textrm{ for all } z\in N_{x,y}\}$
   \STATE processed[$x$]$:=$\textbf{true}
 \ENDWHILE
 \STATE \textbf{return} $H$
\end{algorithmic}
\end{algorithm}

The $r$-th root algorithm is now straightforward. The procedure \emph{reconstructFromOneEdge} attempts to compute $H$ from $G$ assuming the existence of a given edge $e$ in $H$. This is repeated for all possible edges from a fixed vertex $x$. If $e$ is an edge in some $r$-th root $H$ of $G$, then Theorem \ref{theorem:edgedetermines} guarantees that we will recover exactly $H$ (in fact many times, once for each edge $xy\in H$; we omit the obvious optimization which avoids this redundancy).
~\\
\paragraph{\textbf{Remark.}} With an appropriate list representation of $G$ the set $N_{x,y}$ can be determined in time $O(|E(G)|+|V(G)|)$ for any $x,y$, so the running time of \emph{reconstructFromOneEdge} is $O(|V(G)|\cdot|E(G)|)$. By choosing the initial $x$ to be the vertex of the smallest degree in $G$ we can achieve the total running time of $O(\frac{|E(G)|}{|V(G)|}\cdot |V(G)|\cdot |E(G)|)=O(|E(G)|^2)$.

\section{Removing the no-leaves restriction}
\label{section:with-leafs}

In this section we obtain a polynomial time algorithm for the general recognition problem $r$-TH-POWER-OF-$\girth{\geq 2r+3}$-GRAPH, proving Theorem \ref{theorem:positive}. We start with a few definitions (see Fig.\ref{fig:8}).

For a graph $H$, which is not a tree, let $\core(H)$ denote the largest subgraph of $H$ with no vertices of degree one (leaves). Alternatively this can be defined as follows. Given $H$, let $H'$ be the graph obtained from $H$ by removing all leaves and inductively define $H^{(n)} = (H^{(n-1)})'$. This process eventually stabilizes at the graph $\core(H)$.

A vertex $v\in V(H)$ is called a \emph{core vertex} if it belongs to $\core(H)$ and a \emph{non-core vertex} otherwise. The non-core vertices are grouped into trees attached to the core. For every vertex $v\in\core(H)$ we denote by $T_v$ the tree attached at $v$ (including $v$) and by $T_v^{(d)}$ (for $d\geq 0$) the set of vertices of $T_v$ located in distance $d$ from $v$. For a non-core vertex $u$ the \emph{link} of $u$ (denoted $\link(u)$) is its closest core vertex and the \emph{depth} of $u$ (denoted $\depth(u)$) is the distance from $u$ to $\link(u)$. 

\begin{figure}
\includegraphics[scale=1]{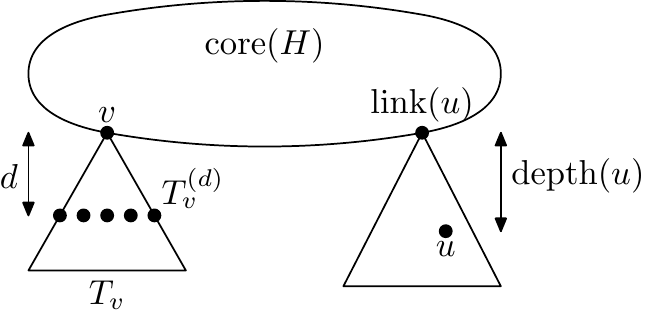}
\caption{The notation of Section \ref{section:with-leafs}.}
\label{fig:8}
\end{figure}

\subsection{Outline of the algorithm.} The algorithm for $r$-TH-POWER-OF-$\girth{\geq 2r+3}$-GRAPH processes the input graph $G$ in several steps (see Algorithm 2). First, we check if $G$ has a tree $r$-th root \cite{KeaCor,CKL}. If not, then we split the vertices of $G$ into the core and non-core vertices of any of its $r$-th roots. Lemma \ref{lemma:leaves} ensures that this partition is uniquely determined only by the graph $G$. 

Let $\tilde{G}$ be the subgraph of $G$ induced by all the vertices that are classified as belonging to the core of any possible $r$-th root $H$. We now employ the algorithm from the previous section to find all $r$-th roots $\tilde{H}$ of $\tilde{G}$ which have girth at least $2r+3$ and no leaves (there is $O(\Delta(G))$ of them; conjecturally there is at most one).

Finally, we must attach the non-core vertices to each of the possible $\tilde{H}$. It turns out that once the core is fixed, the link of each non-core vertex can be uniquely determined, so we can pin down all the sets $V(T_v)$. However, we cannot simply look for any $r$-th tree root of the subgraph of $G$ induced by $V(T_v)$, because we have to ensure that the tree structure that we are going to impose on $V(T_v)$ is compatible with the neighbourhood information contained in the rest of $G$. Fortunately Lemma \ref{lemma:depths} guarantees that for a fixed $G$ and $\core(H)$, all the sets $T_v^{(d)}$ for $d=1,\ldots,r$ are also uniquely determined. Since all the distances from the vertices of $T_v$ to the rest of the graph depend only on the vertex depths and the structure of the core, this is exactly the additional piece of data we need. Any tree root satisfying the given depth constraints will be compatible with the rest of the graph. Concluding, the problem we are left with for each $T_v$ is the RESTRICTED-$r$-TH-TREE-ROOT from Section \ref{section:reductions}. If all these instances have positive solutions, then the graph $H$ defined as $\tilde{H}$ with the trees $T_v$ attached at each core vertex $v$ is an $r$-th root of $G$. 

The next two subsections describe the two crucial steps: detecting non-core vertices and the reconstruction of trees $T_v$.

\begin{algorithm}[h!]
\label{algorithm:2}
\caption{\newline \textbf{Input:} $G$,$r$.\newline \textbf{Output:} $r$-th roots of $G$ in $\girth{\geq 2r+3}$ (one per each core)}
\begin{algorithmic}
\STATE check if $G=T^r$ for some tree $T$
\STATE
\STATE $\tilde{G}:=G$
\WHILE{$\tilde{G}$ has vertices $u,v$ with $B_u\subset B_v$}
 \STATE remove from $\tilde{G}$ all $u$ such that $B_u\subset B_v$ for some $v$
\ENDWHILE
\STATE
\FOR{every graph $\tilde{H}\in\girthplus{\geq 2r+3}$ such that $\tilde{H}^r=\tilde{G}$}
 \STATE $H:=\tilde{H}$
 \FOR{every vertex $v\in V(\tilde{H})$}
    \STATE find $V(T_v)$ and a partition $V(T_v)=\{v\}\cup T_v^{(1)}\cup\ldots\cup T_v^{(r)}\cup T_v^{(>r)}$
    \STATE use $restrictedTreeRoot$ to reconstruct some tree $T_v$
    \STATE extend $H$ by attaching $T_v$ at $v$
 \ENDFOR
 \STATE \textbf{if} all $T_v$ existed \textbf{output} $H$
\ENDFOR
\end{algorithmic}
\end{algorithm}

\subsection{Finding core and non-core vertices.}

The next lemma shows how to detect all vertices located ``close to the bottom'' of the trees $T_v$ in $H$.

\begin{lemma}
\label{lemma:leaves}
Suppose $H\in\girth{\geq 2r+3}$ and $H^r=G$.Then the following conditions are equivalent for a vertex $u\in H$:
\begin{itemize}
\item[(1)] There is some other vertex $v\in H$ such that $B_u\subset B_v$.
\item[(2)] $u\not\in H^{(r)}$.
\end{itemize}
\end{lemma}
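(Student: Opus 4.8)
\subsection*{Proof proposal.}
The plan is to prove the two implications separately. The implication (2)$\Rightarrow$(1) will be done by an explicit choice of $v$, while (1)$\Rightarrow$(2) will be routed through an auxiliary notion: call $u$ \emph{$r$-central} if there exist two simple paths of length $r$ starting at $u$ and meeting only at $u$. I will establish the chain
$$\text{(1)}\ \Longrightarrow\ u\text{ is not }r\text{-central}\ \Longrightarrow\ \text{(2)},$$
and separately the easy direction (2)$\Rightarrow$(1). For the easy direction, suppose $u\notin H^{(r)}$, so $u$ is deleted within the first $r$ rounds of leaf-peeling; since core vertices survive forever, $u$ is a non-core vertex in a tree hanging off the core, and the subtree rooted at $u$ (the part pointing away from the core) has height at most $r-1$. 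Let $p$ be the neighbour of $u$ pointing toward the core. Then $B_u\subseteq B_p$: a vertex $x$ below $u$ has $\dist_H(p,x)=\dist_H(u,x)+1\le r$, while every other $x\in B_u$ has its $u$--$x$ geodesic through $p$, so $\dist_H(p,x)=\dist_H(u,x)-1\le r$. Thus $p\ne u$ witnesses (1).

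For (1)$\Rightarrow$(2), assume $B_u\subseteq B_v$ with $v\ne u$ and set $k=\dist_H(u,v)\in\{1,\ldots,r\}$ (note $u\in B_u\subseteq B_v$). Since the girth exceeds $2r$, the ball $B_u$ induces a tree rooted at $u$, and the children of $u$ partition $B_u\setminus\{u\}$ into subtrees, one of which, the \emph{$v$-subtree}, contains $v$. The key step is to show that every \emph{other} subtree reaches depth at most $r-k$ from $u$. Indeed, if some non-$v$ subtree contained a vertex $w$ at distance exactly $r+1-k$ from $u$, then concatenating the $u$--$v$ geodesic with the $u$--$w$ geodesic (they lie in different subtrees, hence meet only at $u$) yields a \emph{simple} $v$--$w$ path of length $r+1$; by the observation recorded at the start of Section~\ref{section:algorithm} (a simple path of length $r+1$ or $r+2$ forces distance $>r$) this gives $\dist_H(v,w)>r$, so $w\in B_u\setminus B_v$, contradicting $B_u\subseteq B_v$. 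Since distances into a fixed branch take every value up to their maximum, no non-$v$ branch reaches distance $r+1-k$ at all, so each has height at most $r-k<r$ measured from $u$.

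It remains to convert this depth bound into non-$r$-centrality and then into (2). A branch of height at most $r-k<r$ from $u$ contains no cycle, because any cycle has length at least $2r+3$ and hence (as $B_u$ induces a tree) owns a vertex at distance more than $r$ from $u$; being acyclic, such a branch is a tree in which every simple path out of $u$ has length at most its height, so $<r$. Therefore every simple path of length $r$ from $u$ must enter the $v$-branch on its first edge, and no two of them can meet only at $u$, i.e.\ $u$ is not $r$-central. Finally I will prove by induction on $t$ (for $t\le r$) that $u\in H^{(t)}$ implies $u$ has two simple paths of length $t$ meeting only at $u$: the inductive step uses that $u\in H^{(t)}$ has at least two neighbours in $H^{(t-1)}$, prolongs at two distinct such neighbours the paths supplied by the inductive hypothesis, and invokes the girth bound $2t\le 2r<2r+3$ to ensure the two extended paths still meet only at $u$. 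The contrapositive at $t=r$ gives $u\notin H^{(r)}$, which is (2).

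The main obstacle is precisely the converse direction, and within it the fact that $v$ need not lie ``between $u$ and the core'': $u$ could be a core vertex on a long cycle, in which case a naive ``walk away from $v$'' can fold back onto itself. Organising the argument around subtrees of the ball $B_u$ rather than components of $H-u$ is what defuses this, since two different directions around a long cycle become genuinely different subtrees of $B_u$, and the girth observation then produces the distance-$(r+1)$ witness uniformly. The second delicate point is bridging the depth bound on branches and the peeling statement (2); I expect this to be cleanest through the $r$-centrality reformulation and the short induction above, which sidesteps an explicit case analysis of the core structure.
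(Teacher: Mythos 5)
Your proof is correct and follows essentially the same route as the paper: the easy direction takes the last surviving neighbour of $u$ as the witness $v$, and the hard direction rests on the equivalence between membership in $H^{(r)}$ and the existence of two internally disjoint length-$r$ paths at $u$, refuted via a simple path of length $r+1$ inside the tree $B_u$. The only difference is presentational --- you argue the hard direction in contrapositive form and supply the inductive proof of the two-disjoint-paths claim, which the paper merely asserts.
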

\begin{proof}
If $u\in H^{(r)}$ then $u$ is not removed in the first $r$ steps of cutting off the leaves of $H$, which means there exist at least two disjoint paths of length $r$ starting at $u$. However, it implies that for every vertex $v\in B_u$ there exists another $v'\in B_u$ (on one of those paths) such that $\dist_H(v,v')=r+1$, hence $v'\in B_u$ but $v'\not\in B_v$. Therefore $B_u$ is not contained in $B_v$ for any $v\neq u$.

If, on the other hand, $u\not\in H^{(r)}$, then $u$ becomes a leaf after at most $r-1$ steps of the leaf-removal procedure and is removed in the subsequent step. Let $v$ be the last vertex adjacent to $u$ just before $u$ is removed. Clearly $B_u\subset B_v$.
\end{proof}

An inductive repetition of the above criterion determines the consecutive sets $V(H^{(r)})$, $V(H^{(2r)})$, $V(H^{(3r)})$, $\ldots$ for \emph{any} $r$-th root $H\in\girth{\geq 2r+3}$ of $G$ using only the information available in $G$. Eventually we obtain $V(\core(H))$ which is the vertex set of $\tilde{G}$.

\textbf{Remark.} Repeated application of Lemma \ref{lemma:leaves} also proves that if $G$ has a tree $r$-th root then it does not have a non-tree $r$-th root in $\girth{\geq 2r+3}$ and vice-versa.

\subsection{Attaching the trees $T_v$.}
For each possible $\core(H)$ we need to decide on a way of attaching the remaining (non-core) vertices to $H$ in a way which ensures that $H^r=G$. It turns out that all the data necessary to ensure the compatibility can be read off from $G$ and $\core(H)$, so again this data is common for all the possible $r$-th roots of $G$ that have a fixed core. 

\begin{lemma}
\label{lemma:depths}
Suppose that $H\in\girth{\geq 2r+3}$ is a graph such that $H$ is not a tree and $H^r=G$. Then for every non-core vertex $u$ of $H$ we have: 
\begin{itemize}
\item either $B_u\cap V(\core(H))=\emptyset$, in which case $\depth(u)>r$, or
\item the subgraph of $H$ induced by $B_u\cap V(\core(H))$ is a tree whose only center is $\link(u)$ and whose height (the distance from the center to every leaf) is $r-\depth(u)$.
\end{itemize}
\end{lemma}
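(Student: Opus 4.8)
I will work inside a fixed non-tree root $H\in\girth{\geq 2r+3}$ of $G$ and analyze a single non-core vertex $u$ by following paths of length at most $r$ from $u$ into the core. The key geometric fact I will lean on throughout is the girth condition: since $H$ has girth at least $2r+3$, the induced subgraph of $H$ on any ball $B_u$ of radius $r$ is a tree (any cycle there would have length at most $2r+1$). In particular distances inside $B_u$ are realized by unique shortest paths, which lets me talk unambiguously about ``going towards the core''.

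First I will dispose of the case $B_u\cap V(\core(H))=\emptyset$. A non-core vertex $u$ lies in some attached tree $T_w$ with $\link(u)=w$ and $\depth(u)=\dist_H(u,w)$; the unique shortest path from $u$ to any core vertex passes through $w$. If no core vertex lies within distance $r$ of $u$, then in particular $w\notin B_u$, i.e.\ $\depth(u)=\dist_H(u,w)>r$, which is exactly the first alternative.

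Next, assuming $B_u\cap V(\core(H))\neq\emptyset$, so that $\depth(u)=\dist_H(u,\link(u))\leq r$, I will identify the induced subgraph $C:=H[B_u\cap V(\core(H))]$ and argue it is a tree of the claimed shape. Write $w=\link(u)$ and $\delta=\depth(u)$. Every core vertex at distance $\leq r$ from $u$ is reached by first travelling the length-$\delta$ path from $u$ up to $w$ and then moving within the core; hence such a core vertex is at distance $\leq r-\delta$ from $w$ inside the core, and conversely every core vertex within core-distance $r-\delta$ of $w$ sits in $B_u$. Thus $B_u\cap V(\core(H))$ is exactly the set of core vertices at distance $\leq r-\delta$ from $w$. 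Since this set lies inside the ball $B_u$ whose induced subgraph in $H$ is a tree, $C$ is a tree; it is precisely the radius-$(r-\delta)$ ball around $w$ within that tree, so $w$ is a center and the height (distance from $w$ to any leaf of $C$) is $r-\delta=r-\depth(u)$. I will then verify $w$ is the \emph{only} center: any other center would force two vertices of $C$ realizing the diameter symmetrically, but moving away from $w$ in $C$ strictly increases distance to $u$, so $w$ is the unique vertex minimizing distance to $u$ and the unique center; here the girth bound guarantees that the relevant paths are genuinely geodesics and no shortcut through a short cycle can relocate the center.

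The main obstacle I anticipate is the uniqueness-of-center claim and, more subtly, confirming that every leaf of $C$ is at distance exactly $r-\delta$ from $w$ rather than merely at most that. The ``at most'' direction is immediate from $C\subset B_u$; for the sharp height I must rule out that the core simply ``runs out'' before reaching distance $r-\delta$. This is where I expect to need the structural input that core vertices have degree $\geq 2$ within the core (they survive the leaf-removal process that defines $\core(H)$), so from $w$ one can always continue a path inside the core until distance $r-\delta$ is reached, forcing at least one leaf of $C$ to sit exactly at that distance; combined with the girth bound preventing these extended paths from closing into short cycles, this pins the height at exactly $r-\depth(u)$ and completes the argument.
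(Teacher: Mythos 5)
Your overall route is the same as the paper's: identify $B_u\cap V(\core(H))$ as the set of core vertices within distance $r-\depth(u)$ of $\link(u)$, use the girth bound to see that this set induces a tree, and use the fact that core vertices have degree at least $2$ in $\core(H)$ to control the leaves of that tree. The first alternative, the identification of the set, and the treeness argument are all fine.

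The gap is in the last step, and it matters because the unique center is exactly what the algorithm later extracts from this lemma. You only establish that \emph{at least one} leaf of $C=H[B_u\cap V(\core(H))]$ sits at distance exactly $r-\depth(u)$ from $w=\link(u)$ (by extending a single path from $w$), whereas the statement needs that \emph{every} leaf does; and your justification of uniqueness of the center --- ``$w$ is the unique vertex minimizing distance to $u$, hence the unique center'' --- is a non sequitur: the center of the tree $C$ is an intrinsic notion (minimum eccentricity \emph{within} $C$) and has nothing a priori to do with distances to the external vertex $u$. With only ``one leaf at full depth'' one cannot rule out, on the strength of your written argument, a lopsided tree whose center drifts away from $w$. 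The repair is to run your degree argument at \emph{every} vertex: if $v\in C$ has $\dist_H(w,v)=d<r-\depth(u)$, then $v$ has at least two neighbours in $\core(H)$, each at core-distance at most $d+1\leq r-\depth(u)$ from $w$ and hence in $C$, so $v$ is not a leaf of $C$. Thus all leaves of $C$ lie at distance exactly $r-\depth(u)$ from $w$ (this is precisely what the paper asserts), $w$ has at least two children in $C$ when $r>\depth(u)$, the diameter of $C$ is $2(r-\depth(u))$, and $w$ is the unique midpoint of every diametral path, hence the unique center.
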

\begin{proof}
The first statement is obvious. As for the second, the subgraph induced by $B_u\cap V(\core(H))$ consists of all the vertices of $V(\core(H))$ in distance at most $r-\depth(u)$ from $\link(u)$. Since $\core(H)$ is a graph of girth at least $2r+3$ with no degree one nodes, these vertices induce a tree in $H$, and all the leaves of this tree are exactly in distance $r-\depth(u)$ from $\link(u)$. Therefore $\link(u)$ is the unique center of that tree.
\end{proof}

Lemma \ref{lemma:depths} yields a method of partitioning the non-core vertices into the sets $V(T_v)$ and subdividing each $V(T_v)$ into a disjoint union $\{v\}\cup T_v^{(1)}\cup\ldots\cup T_v^{(r)}\cup T_v^{(>r)}$ of vertices in distance $1,2,\ldots,r$ and more than $r$ from $v$ using only the data from $G$ and $\core(H)$. Indeed, for the vertices $u$ with $B_u\cap V(\core(H))\not=\emptyset$ one finds the center and height of the subtree of $\core(H)$ induced by $B_u\cap V(\core(H))$ and applies the second part of Lemma \ref{lemma:depths} to obtain both $\link(u)$ and $\depth(u)$, thus classifying $u$ to the appropriate $T_v^{(d)}$. The links of all remaining vertices are determined using the fact that all vertices in one connected component of $G\setminus\bigcup_{v\in \core(H), d=0,\ldots,r-1} T_v^{(d)}$ have the same link.

\textbf{Remark.} The above partition can also be obtained (perhaps in a computationally easier way) from the following fact:

\begin{lemma}
If $H\in\girth{\geq 2r+3}$ is a graph such that $H$ is not a tree and $H^r=G$, then for every vertex $v$ of $\core(H)$ and every $d=1,\ldots,r$ we have:
$$T_v^{(d)} = \bigcap_{\substack{a\in \core(H)\\ \dist_H(v,a)\leq r-d}}B_a\setminus \bigcup_{\substack{b\in \core(H)\\ \dist_H(v,b)\geq r-d+1}}B_b.$$
\end{lemma}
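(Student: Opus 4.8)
The plan is to prove both inclusions by reducing everything to an explicit description of the set $B_u\cap V(\core(H))$ and then invoking Lemma \ref{lemma:depths}. The backbone is a structural observation: since $H$ has girth $\geq 2r+3$, every cycle of $H$ lies entirely inside $\core(H)$ (each cycle vertex keeps degree $\geq 2$ throughout the leaf-removal process, so it is never deleted), hence the non-core vertices form a forest whose trees are each joined to $\core(H)$ through a single core vertex --- a second vertex adjacent to the core, or a single vertex adjacent to two core vertices, would close up a cycle containing a non-core vertex. I would record this first, together with its consequence that every $u$--$a$ path to a core vertex $a$ passes through $\link(u)$, giving the distance formula
$$\dist_H(u,a)=\depth(u)+\dist_H(\link(u),a).$$

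For the forward inclusion, take $u\in T_v^{(d)}$, so that $\link(u)=v$ and $\depth(u)=d$. For a core vertex $a$ with $\dist_H(v,a)\leq r-d$ the formula gives $\dist_H(u,a)\leq r$, hence $u\in B_a$; for a core vertex $b$ with $\dist_H(v,b)\geq r-d+1$ it gives $\dist_H(u,b)\geq r+1$, hence $u\notin B_b$. This is precisely membership of $u$ in the claimed right-hand side.

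For the reverse inclusion, let $u$ lie in the right-hand side. Using the symmetry $a\in B_u\iff u\in B_a$, the two set operations translate directly into a description of $B_u\cap V(\core(H))$: the intersection term forces every core vertex within distance $r-d$ of $v$ into $B_u$, while the removed union term forces every core vertex at distance $\geq r-d+1$ from $v$ out of $B_u$. Therefore
$$B_u\cap V(\core(H))=\{a\in V(\core(H)):\dist_H(v,a)\leq r-d\},$$
the radius-$(r-d)$ ball around $v$ inside $\core(H)$. In particular it contains $v$, so it is nonempty and $u$ is a non-core vertex falling under the second case of Lemma \ref{lemma:depths}, which says the tree induced by $B_u\cap V(\core(H))$ has $\link(u)$ as its unique center and height $r-\depth(u)$. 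But the same tree is the radius-$(r-d)$ ball around the core vertex $v$, whose unique center is $v$ and whose height is $r-d$. Matching the unique centers and the heights yields $\link(u)=v$ and $\depth(u)=d$, i.e. $u\in T_v^{(d)}$.

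The main obstacle is the last step of the reverse direction: that a ball of radius $k=r-d$ in $\core(H)$ has a well-defined unique center and height equal to $k$, so that it can be matched against the conclusion of Lemma \ref{lemma:depths}. This is exactly where the hypotheses are used: minimum degree $\geq 2$ guarantees that $v$ has at least two branches each reaching depth exactly $k$, forcing $v$ to be the unique midpoint of a diameter of length $2k$, while girth $\geq 2r+3$ (with $k\leq r$) guarantees that the ball induces a tree at all. The boundary case $d=r$, where $k=0$ and the ball degenerates to $\{v\}$, must be checked separately but is immediate: a one-vertex tree has center $v$ and height $0$, again giving $\link(u)=v$ and $\depth(u)=r$.
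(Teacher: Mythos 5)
Your argument is correct in both directions and takes a genuinely different route from the paper. The paper proves the reverse inclusion directly by contradiction: assuming $u\in T_{v'}$ with $v'\neq v$ and $\depth(u)=l\in\{1,\dots,r\}$, it sets $k=\dist_H(v,v')$, splits into the cases $r-d\leq k-1$ and $r-d\geq k$, and in each case picks explicit core vertices $a,b$ on the $v$--$v'$ geodesic (or on its extension inside the core) whose membership conditions $u\in B_a$, $u\notin B_b$ force contradictory distance equations. You instead convert membership in the right-hand side into a complete description of $B_u\cap V(\core(H))$ as the radius-$(r-d)$ ball around $v$ in the core, and then read off $\link(u)$ and $\depth(u)$ by matching the unique center and the height of the induced tree against the conclusion of Lemma \ref{lemma:depths}. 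This is essentially the ``center-and-height'' classification procedure that the paper describes informally right after Lemma \ref{lemma:depths}, promoted to a proof of the closed formula; it is arguably cleaner and makes the forward inclusion (which the paper dismisses as obvious) into a one-line computation via $\dist_H(u,a)=\depth(u)+\dist_H(\link(u),a)$. You also correctly isolate and handle the real crux, namely that a radius-$(r-d)$ ball around a core vertex induces a tree of height exactly $r-d$ with that vertex as unique center (using minimum degree $2$ in the core and the girth bound), including the degenerate case $d=r$.

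One small gap: in the reverse direction you assert that $u$ ``is a non-core vertex falling under the second case of Lemma \ref{lemma:depths}'', but nonemptiness of $B_u\cap V(\core(H))$ only rules out the first alternative of that lemma \emph{for non-core $u$}; it does not by itself exclude the possibility that $u\in\core(H)$, in which case the lemma says nothing. This needs one extra line: if $u$ were a core vertex then $u\in B_u$ forces $\dist_H(v,u)\leq r-d$, and extending a shortest $v$--$u$ path beyond $u$ inside the core (possible since the core has minimum degree $2$) reaches a core vertex $b$ with $\dist_H(v,b)=r-d+1$ and $\dist_H(u,b)\leq r-d+1\leq r$, contradicting $u\notin B_b$. (The paper's own proof is equally terse on this point, restricting silently to $1\leq\depth(u)\leq r$, so this is a patch rather than a flaw in your strategy.)
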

\begin{proof}

The $\subset$ inclusion is obvious. Now suppose $u$ belongs to the right-hand side. If $u$ is in $T_v$, then it clearly must have depth $d$, so it suffices to prove that $u$ cannot belong to any other $T_{v'}$. Suppose this is the case: $\rooot(u)=v'\neq v$ and $l:=\depth(u)$, $1\leq l\leq r$. Set $k=\dist_H(v,v')$ and let $\Gamma$ denote the shortest path from $v$ to $v'$. 

If $r-d\leq k-1$ then let $a,b\in\Gamma$ be the vertices in distance $r-d$ and $r-d+1$ from $v$, respectively. By assumption $u\in B_a$ and $u\not\in B_b$, but that is impossible since $b$ is closer to $u$ than $a$.

When $r-d\geq k$ extend the path $\Gamma$ beyond $v'$ in $\core(H)$, so as to reach two vertices $a$ and $b$ in distance $r-d$ and $r-d+1$ from $v$, respectively (this is possible in $\core(H)$). The assumption $u\in B_a\setminus B_b$ enforces $\dist_H(u,a)=r$, but $\dist_H(u,a)=l+(r-d-k)$ so $l=k+d$.

Now extend the path $\Gamma$ beyond $v$ up to a point $w$ such that $\dist_H(v,w)=r-k-l+1$ (this is possible because $u\in B_v$ implies $k+l\leq r$ and because we are in the core). Moreover, the path $u\edge v'\edge v\edge w$ has length $r+1$ so it measures the distance from $u$ to $w$ and proves that $u\not\in B_w$. On the other hand:
$$\dist_H(v,w)=r-k-l+1=r-k-(k+d)+1=r-d+1-2k<r-d$$
so we ought to have $u\in B_w$. This contradiction ends the proof.
\end{proof}

\section{Hardness results}
\label{section:large-girth}

Now we proceed to the hardness of recognition for powers of graphs of lower-bounded girth (Theorem \ref{theorem:girth}). For the reductions we use the following NP-complete problem (see \cite[Prob. SP4]{Gar}). It has already been successfully applied in this context (\cite{4auth, Lau, LauCor, LeNg}).

\begin{tabular}{llp{12cm}}
  & \textbf{Problem.} & HYPERGRAPH 2-COLORABILITY  (\hc{}) \\
  & \textbf{Instance.} & A finite set $S$ and a collection $S_1,\ldots,S_m$ of subsets of $S$. \\
  & \textbf{Question.} & Can the elements of $S$ be colored with two colors $A$, $B$ such that each set $S_j$ has elements of both colors?
\end{tabular}

An instance of this problem (also known as SET-SPLITTING) will be denoted $\nices=\instanceS$. We shall refer to the elements of the universum $S$ as $x_1,\ldots, x_n$. Any assignment of colors $A$ and $B$ to the elements of $S$ which satisfies the requirements of the problem will be called a \emph{2-coloring}.

In this section we fix $r$ and let $k=\lfloor \frac{r}{2}\rfloor$, so that $r=2k$ or $r=2k+1$ depending on parity. We shall define a graph that encodes both the structure of the \hc{} instance and the coloring. To ensure large girth, the connections between vertices representing sets, elements and colors will be realized by paths of length $k$. Since the graph under consideration is rather large we shall describe its succinct representation that does not require the enumeration of all edges. 

\subsection{Case of odd $r=2k+1$}
Consider an instance $\nices=\instanceS$ of \hc. The following two definitions describe an auxiliary graph that will be used as a base for further constructions. The reader is referred to Fig.\ref{fig:red-odd} for a self-explanatory presentation of the graphs $K_\nices$ and $H_\nices$ defined below.

\begin{definition}
For an instance $\nices=\instanceS$ let $V_\nices$ be the following set of vertices:
\begin{itemize}
\item $S_j, x_i$ for all subsets and elements,
\item $A, B, X$, 
\item $T_{i,j}^{(l)}$ for every pair $i,j$ such that $x_i\in S_j$ and every $l=1,\ldots,k-1$,
\item $P_{i}^{(l)}$ for every $x_i$ and every $l=1,\ldots,k-1$,
\item the tail vertices $S_j^{(l)}$ for each $j$ and $l=1,\ldots,r$.
\end{itemize}
\end{definition}

\begin{definition}
Given any instance $\nices=\instanceS$ define a graph $K_\nices$ on the vertex set $V_\nices$ with the following edges:
\begin{itemize}
\item a path $S_j\edge T_{i,j}^{(1)}\edge\ldots\edge T_{i,j}^{(k-1)}\edge x_i$ whenever $x_i\in S_j$,
\item a path $x_i\edge P_{i}^{(1)}\edge\ldots\edge P_{i}^{(k-1)}$ for every $x_i$,
\item $X\edge x_i$ for all $i$,
\item the tail paths, that is $S_j\edge S_j^{(1)}\edge S_j^{(2)}\edge\ldots\edge S_j^{(r)}$ for every $j$.
\end{itemize}
\end{definition}

\begin{figure}
\includegraphics[scale=0.9]{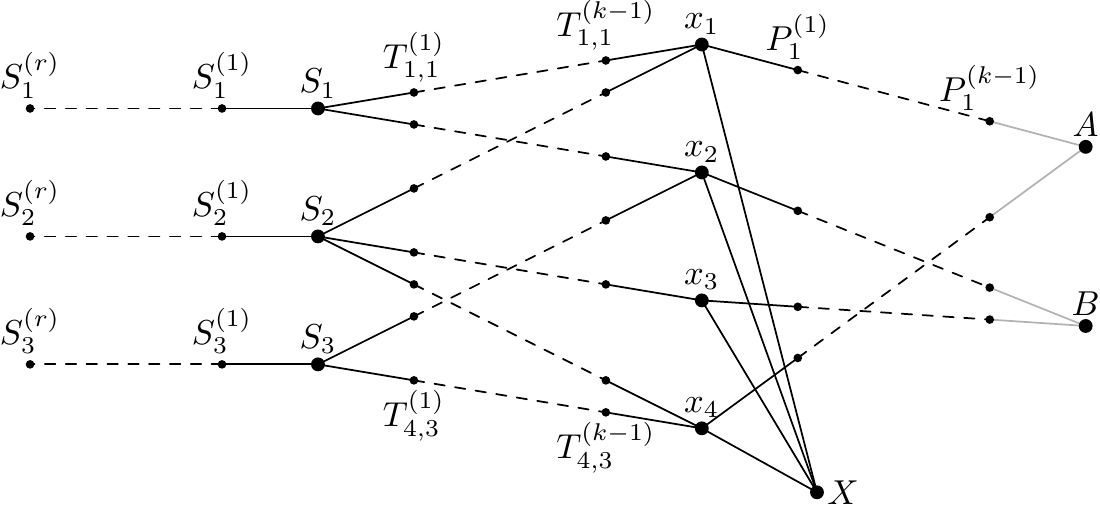}
\caption{For $\nices=(\{x_1,\ldots,x_4\}; \{x_1,x_2\},\{x_1,x_3,x_4\},\{x_2,x_4\}\})$ the graph $K_\nices$ consists of all \emph{but} the shaded edges. The graph $H_\nices$ (made of all the edges above) encodes the coloring with $x_1,x_4$ of color $A$ and $x_2,x_3$ of color $B$. It is a 2-coloring of $\nices$ since all $S_j$ are in distance $2k$ from $A$ and $B$.}
\label{fig:red-odd}
\end{figure}

This graph encodes only the structure of $\nices$. To encode the coloring we link the loose paths from $x_i$ to either $A$ or $B$.

\begin{definition}
Given an instance $\nices$ and a color assignment, define the graph $H_\nices$ to be $K_\nices$ with the additional edges $P_i^{(k-1)}\edge A$ whenever $x_i$ has color $A$ and $P_i^{(k-1)}\edge B$ whenever $x_i$ has color $B$.
\end{definition}

Note that $H_\nices$ has girth $2k+2=r+1$. Now comes the graph to be used in our NP-completeness reduction:

\begin{definition}
For any instance $\nices=\instanceS$ of \hc{} put
$$G_\nices={K_\nices}^r\cup E_\nices$$
where $E_\nices$ is the set of edges from $A$ and $B$ to each of $X$, $x_i$, $S_j$, $T_{i,j}^{(l)}$, $P_i^{(l)}$, and $S_j^{(1)}$ for all possible $i,j,l$.
\end{definition}

Observe that $G_\nices$ is defined independently of any particular color assignment. Moreover:

\begin{lemma}
\label{lemma:dupsko1}
For any 2-colored instance $\nices$ we have $G_\nices={H_\nices}^r$.
\end{lemma}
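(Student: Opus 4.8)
The plan is to compare the edge sets of $G_\nices$ and $H_\nices^r$ directly on the common vertex set $V_\nices$. Since $E_\nices$ adds only edges incident to $A$ or $B$, and since $K_\nices\subseteq H_\nices$ gives the free inclusion $K_\nices^r\subseteq H_\nices^r$, it suffices to prove two statements: (a) in $H_\nices^r$ the vertex $A$ is adjacent to exactly the vertices listed in $E_\nices$ (and likewise $B$), and (b) no pair of vertices avoiding $A,B$ is adjacent in $H_\nices^r$ unless it is already adjacent in $K_\nices^r$. By the $A\leftrightarrow B$ symmetry (swapping the two colours is an automorphism of the situation) it is enough to treat $A$ in part (a).

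For part (a) I would run a breadth-first search from $A$ in $H_\nices$. The only neighbours of $A$ are the vertices $P_i^{(k-1)}$ with $x_i$ of colour $A$, so BFS gives $\dist_{H_\nices}(A,P_i^{(l)})=k-l$ and $\dist_{H_\nices}(A,x_i)=k$ for colour-$A$ elements, $\dist_{H_\nices}(A,X)=k+1$, and $\dist_{H_\nices}(A,x_i)=k+2$ for colour-$B$ elements (reached through $X$). The crucial computation is $\dist_{H_\nices}(A,S_j)=2k$: since $S_j$ can be entered only through one of its element-paths, this distance equals $k+\min_{x_i\in S_j}\dist_{H_\nices}(A,x_i)$, and the minimum is $k$ precisely because the fixed colouring is a 2-coloring, hence $S_j$ contains a colour-$A$ element. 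This is exactly the point where 2-colorability enters. From these values one checks routinely that every vertex listed in $E_\nices$ lies within distance $r=2k+1$ of $A$ (the tightest cases being $P_i^{(k-1)}$, $T_{i,j}^{(1)}$ and $S_j^{(1)}$, all at distance exactly $r$, where $\dist_{H_\nices}(A,S_j^{(1)})=2k+1$ again uses $\dist_{H_\nices}(A,S_j)=2k$), while the excluded vertices satisfy $\dist_{H_\nices}(A,B)=2k+2$ and $\dist_{H_\nices}(A,S_j^{(l)})=2k+l\geq 2k+2$ for $l\geq 2$, both exceeding $r$. Thus the $H_\nices^r$-neighbourhood of $A$ is exactly $E_\nices$.

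Part (b) is the main obstacle, because crossing $A$ can shortcut two colour-$A$ branches that are far apart in $K_\nices$. The converse direction is trivial, so assume $\dist_{H_\nices}(u,v)\leq r$ for $u,v\notin\{A,B\}$; I must deduce $\dist_{K_\nices}(u,v)\leq r$. If some shortest $H_\nices$-path avoids $A,B$ it is a $K_\nices$-path and we are done; otherwise every shortest path uses exactly one of $A,B$ (it cannot use both, as $\dist_{H_\nices}(A,B)=r+1>r$), say $A$, whence $\dist_{H_\nices}(u,A)+\dist_{H_\nices}(A,v)\leq 2k+1$. The closer endpoint, say $u$, then satisfies $\dist_{H_\nices}(u,A)\leq k$, which by the BFS of part (a) forces $u$ to sit on the branch of some colour-$A$ element $x_{i_1}$ at offset $s_u:=\dist_{K_\nices}(u,x_{i_1})\leq k-1$, with $\dist_{H_\nices}(u,A)=k-s_u$. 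The $A$-path leaves $A$ through a colour-$A$ element $x_{i_2}$, so $\dist_{H_\nices}(A,v)=k+t$ with $t=\dist_{K_\nices}(x_{i_2},v)$; the budget gives $t\leq 1+s_u$. Now I reroute inside $K_\nices$ through the hub $X$: the walk $u\edge\cdots\edge x_{i_1}\edge X\edge x_{i_2}\edge\cdots\edge v$ has length $s_u+2+t\leq 2s_u+3\leq 2k+1=r$, so $\dist_{K_\nices}(u,v)\leq r$, as required.

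The delicate quantitative input is precisely the bound $s_u\leq k-1$ on the offset of $u$ along its $P$-branch, which is what keeps the rerouted length within $r$; the argument would collapse if the colour-$A$ branches were even one edge longer, matching the fact that the girth is tuned to $r+1$. I would also record the two degenerate conventions (the short paths have length $k$, and for small $k$ the $P$- and $T$-vertices may be absent, in which case $P_i^{(0)}$ and $x_i$ coincide) and verify that all the distance computations above hold verbatim under them. Combining (a) and (b), together with the trivial inclusion $K_\nices^r\subseteq H_\nices^r$, yields $H_\nices^r=K_\nices^r\cup E_\nices=G_\nices$.
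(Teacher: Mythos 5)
Your proposal is correct and follows essentially the same route as the paper: compute the $H_\nices$-distances from $A$ and $B$ (invoking 2-colorability exactly where you do, to get $\dist_{H_\nices}(A,S_j)=2k$), and reroute any $u$--$v$ path crossing $A$ through the hub $X$ inside $K_\nices$. The paper's version of your part (b) is cruder and shorter --- it only uses that both endpoints lie within $2k$ of $A$, hence within $k+1$ of $X$ (and one of them, not being an $S_j$, within $k$) --- which sidesteps your budget bookkeeping and the slightly misstated identity $\dist_{H_\nices}(A,v)=k+t$ (false when $v$ itself lies on a colour-$A$ $P$-branch, though that subcase is harmless since then both endpoints reroute through $X$ with total length at most $2k$).
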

\begin{proof}
Since $K_\nices\subset H_\nices$ then ${K_\nices}^r\subset {H_\nices}^r$. Now let us prove that $E_\nices\subset{H_\nices}^r$. In $H_\nices$ both $A$ and $B$ are in distance $2k=r-1$ from each of $S_j$ (because we had a proper 2-coloring), hence in ${H_\nices}^r$ we have edges from $A$ and $B$ to $S_j$ and $S_j^{(1)}$ (and to no further $S_j^{(l)}$). Both $A$ and $B$ are in distance $k+1$ from $X$ and $X$ is at most $k$ steps from each of $x_i$, $T_{i,j}^{(l)}$ and $P_i^{(l)}$, therefore $A$ and $B$ are at most $(k+1)+k=r$ steps from all these vertices. This proves that $G_\nices\subset {H_\nices}^r$.

Next we prove that ${H_\nices}^r\subset G_\nices$. Indeed, we have already checked that the edges between $\{A,B\}$ and the rest of the graph ${H_\nices}^r$ are exactly as described by $E_\nices$. Moreover, $A$ and $B$ are not adjacent in ${H_\nices}^r$ since $\dist_{H_\nices}(A,B)=2(k+1)=r+1$.

We only need to check that if two vertices $u$, $v$ other than $A$, $B$ are in distance at most $r$ in $H_\nices$ then they are in distance at most $r$ in $K_\nices$. If the shortest path from $u$ to $v$ in $H_\nices$ does not pass through $A$ or $B$ then of course it is true. If it does pass (say through $A$)
then both $u$ and $v$ must be from the set $\{x_i,S_j,X,T_{i,j}^{(l)},P_i^{(l)}\}$, because they cannot be further than $r-1=2k$ from $A$. The case $(u,v)=(S_j,S_{j'})$ is impossible, therefore at least one of $u$, $v$ (say $u$) is not any of the $S_j$. But then $\dist_{K_\nices}(u,X)\leq k$ and $\dist_{K_\nices}(v,X)\leq k+1$, so $\dist_{K_\nices}(u,v)\leq k+(k+1)=r$ as required.
\end{proof}

\begin{proof}[Proof of Theorem \ref{theorem:girth} for odd $r$]
Given an instance $\nices=\instanceS$ construct the graph $G_\nices$. If $\nices$ has a 2-coloring, then $G_\nices$ is the $r$-th power of a graph with girth at least $r+1$, namely $G_\nices={H_\nices}^r$ by Lemma \ref{lemma:dupsko1}.

For the inverse implication suppose that $G_\nices=H^r$ for some graph $H$. Define the coloring as follows: $x_i$ has color $A$ (resp. $B$) if there is a path of length at most $k$ from $x_i$ to $A$ (resp. $B$) in $H$. Clearly each $x_i$ is assigned at most one color since otherwise $A$ and $B$ would be adjacent in $H^r$.

The tail structure $S_j,S_j^{(1)},\ldots,S_j^{(r)}$ of each $S_j$ satisfies the assumptions of Lemma \ref{lemma:gadget}, so it enforces that in $H$:
\begin{itemize}
\item for every $j$ the $k$-neighbourhood of $S_j$ is precisely $\{x_i: x_i\in S_j\}\cup\{S_j^{(k)}\}$ (as in $K_\nices$),
\item $A$ and $B$ are exactly in distance $2k$ from each $S_j$ (by the definition of $E_\nices$).
\end{itemize}
Therefore for each $j$ there has to be at least one vertex in $\{x_i: x_i\in S_j\}$ that is $k$ steps from $A$ and at least one that is $k$ steps from $B$. This proves that the obtained coloring solves the \hc{} instance.
\end{proof}

\subsection{Case of even $r=2k$}
The argument in this case is similar, so we just outline the necessary changes. This is an extension of the construction from \cite{4auth}. 

Define the vertex set $V_\nices$ as
\begin{itemize}
\item $x_i$, $S_j$, $X$, $A$, $A'$, $B$, $B'$,
\item $T_{i,j}^{(l)}$ whenever $x_i\in S_j$ and $l=1,\ldots,k-1$,
\item $P_{i,A}^{(l)}$, $P_{i,B}^{(l)}$ for all $i$ and $l=1,\ldots,k-1$,
\item $S_j^{(l)}$ for all $j$ and $l=1,\ldots,r$.
\end{itemize}

The graph $K_\nices$ is built as previously, except that instead of a path $x_i\edge\ldots\edge P_i^{(k-1)}$ we have two paths $x_i\edge P_{i,A}^{(1)}\edge\ldots\edge P_{i,A}^{(k-1)}$ and $x_i\edge P_{i,B}^{(1)}\edge\ldots\edge P_{i,B}^{(k-1)}$. Additionally we provide $K_\nices$ with edges $A\edge A'$ and $B\edge B'$. See Fig.\ref{fig:red-even}.

For a 2-colored instance $\nices=\instanceS$ the graph $H_\nices$ is defined as the extension of $K_\nices$ by the edges:
\begin{itemize}
\item $P_{i,A}^{(k-1)}\edge A$ and $P_{i,B}^{(k-1)}\edge B'$ if $x_i$ has color $A$,
\item $P_{i,A}^{(k-1)}\edge A'$ and $P_{i,B}^{(k-1)}\edge B$ if $x_i$ has color $B$.
\end{itemize}

Note that $H_\nices$ has girth $2k+2=r+2$.

Eventually we set $G_\nices={K_\nices}^r\cup E_\nices$, where the edges of $E_\nices$ are from $A$, $A'$, $B$, $B'$ to each of $x_i$, $S_j$, $T_{i,j}^{(l)}$, $P_{i,A}^{(l)}$, $P_{i,B}^{(l)}$, $X$ and two extra edges $A\edge B'$ and $B\edge A'$.

Intuitively, $A$, $B$ encode the colors while $A'$, $B'$ encode the ``non-colors'', i.e. an element $x_i$ of color $A$ is connected to $A$ and $B'$ (``non-$B$''). This complication is necessary to ensure that $T_{i,j}^{(1)}$ and $A$ are connected by a path of length at most $r$ regardless of the color of $x_i$. (Indeed, had we only retained the vertices $A$, $B$ as before, then $\dist_{H_\nices}(T_{i,j}^{(1)}, A)$ would be either $2k-1=r-1$ if $x_i$ had color $A$ or $2k+1=r+1$ otherwise. Remember that the $r$-th power of $H_\nices$ must not depend on the color assignment.)

\begin{figure}[h!]
\includegraphics[scale=0.9]{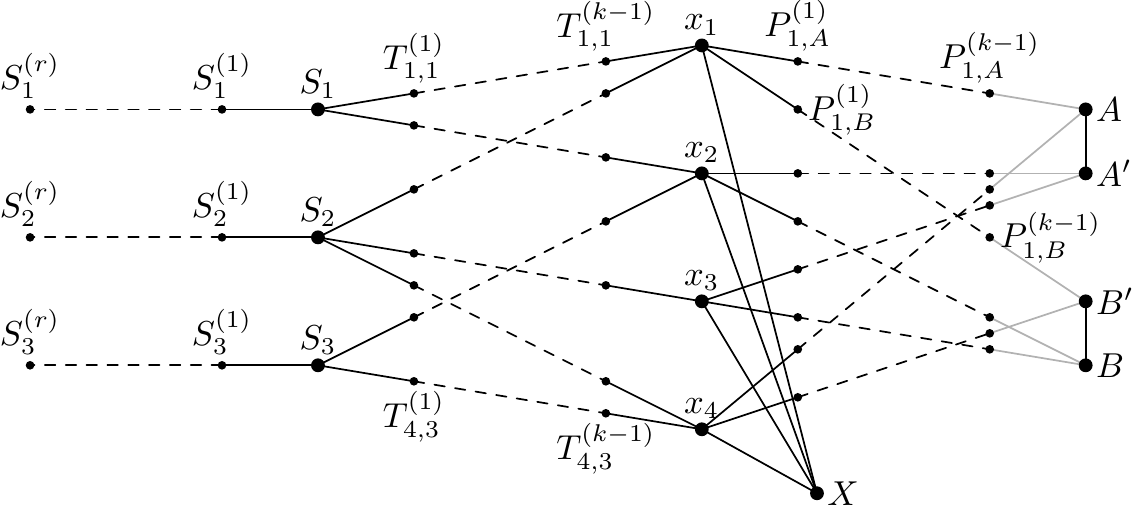}
\caption{The graphs $K_\nices$ and $H_\nices$ for even $r$. For description see Fig.\ref{fig:red-odd}.}
\label{fig:red-even}
\end{figure}

Also, when proving $G_\nices={H_\nices}^r$ one must consider the vertex pairs $(P_{i,A}^{(k-1)},S_j)$ and $(P_{i,B}^{(k-1)},S_j)$ separately, since whether they form an edge in ${K_\nices}^r$ or not depends on whether $x_i\in S_j$. However, this dependence does not change when passing from ${K_\nices}^r$ to ${H_\nices}^r$. 

With these alterations the proof in this case is analogous to the one for odd $r$.

\section{Conclusions and open problems}
\label{section:conclusions}

In this work we presented an efficient algorithmic solution to Levenshtein's reconstruction conjecture and we applied it to a more general, unrestricted $r$-th root problem. From a high-level perspective, it was possible because we could extract the ``core of the problem'' which has very few solutions (as the conjecture suggests), so we could hope that these can be found quickly. We also hope that the reverse flow of ideas is possible, so that some improved algorithmic edge-by-edge reconstruction technique might help resolve Levenshtein's conjecture.

Another (probably challenging) problem is to find a complete girth-parametrized complexity dichotomy, that is to close the gap between $r+1$ (or $r+2$) and $2r+3$. We believe that the $r$-th power recognition remains NP-complete even for graphs of girth $2r$.

In fact it would even be very interesting to find the complexity of SQUARE-OF-$\girth{\geq 5}$-GRAPH, to complete the complexity dichotomy of \cite{4auth} for $r=2$, because of possible implications in algebraic graph theory. Note, that if any algorithm for this problem is run with the complete graph $G=K_n$ as input, then the answer is ''yes'' if and only if there exists a graph on $n$ vertices that has girth at least $5$ and diameter $2$. By the Hoffman-Singleton theorem (see \cite{Singl,Biggs}) such a graph may exist only for $n=5,10,50$ and $3250$. The first three of these graphs are known, and the existence of the last one (for $n=3250$) is a long-standing open problem. Therefore, any efficient algorithm for SQUARE-OF-$\girth{\geq 5}$-GRAPH might (at least in principle) solve this problem.


\end{document}